\newcommand{\arxiv}[1]{\href{http://arxiv.org/pdf/#1}{arXiv:#1}}
\newcommand{\psy}{\psi}
\newcommand{\Lin}{\operatorname{Lin}}
\newcommand{\Ker}{\operatorname{Ker}}
\newcommand{\divides}{|}
\renewcommand{\int}{\operatorname{int}}
\newcommand{\phie}{\phi'}
\renewcommand{\epsilon}{\varepsilon}
\renewcommand{\Im}{\operatorname{Im}}
\newcommand{\CC}{\mathcal{C}}
\newcommand{\PP}{\mathcal{P}}
\newcommand{\fat}[1]{\mathbf{{#1}}}
\providecommand{\norm}[1]{\left| {#1}\right|}
\newcommand{\tensor}{\otimes}
\newcommand{\ground}{\Bbbk}
\newcommand{\N}{\mathbb{N}}
\newcommand{\C}{\mathbb{C}}
\renewcommand{\P}{\mathbb{P}}
\newcommand{\brok}[2]{{\textstyle{\frac{#1}{#2}}}}
\newtheorem{lemma}{Lemma}[section]
\newtheorem{prop}[lemma]{Proposition}
\newtheorem{theorem}[lemma]{Theorem}
\newtheorem{cor}[lemma]{Corollary}
\newtheorem{taller}[lemma]{$\!\!$}
\newenvironment{blanko}[1]%
{\begin{taller}{\normalfont\bfseries  #1}\normalfont}%
{\end{taller}}
\renewcommand{\section}{\@startsection {section}{1}{\z@}%
{-3.5ex \@plus -1ex \@minus -.2ex}%
{2.3ex \@plus.2ex}%
{\normalfont\large\bfseries}}
\renewcommand*{\l@section}[2]{%
  \ifnum \c@tocdepth >\z@
    \addpenalty\@secpenalty
    \setlength\@tempdima{1.5em}%
    \begingroup
      \parindent \z@ \rightskip \@pnumwidth
      \parfillskip -\@pnumwidth
      \leavevmode 
      \advance\leftskip\@tempdima
      \hskip -\leftskip
      #1\nobreak\hfil \nobreak\hb@xt@\@pnumwidth{\hss #2}\par
    \endgroup
  \fi}
\newcommand{\subjclass}[2][2010]{%
  \let\@oldtitle\@title%
  \gdef\@title{\@oldtitle\footnotetext{#1 \emph{Mathematics subject classification.} #2}}%
}
\newcommand{\keywords}[1]{%
  \let\@@oldtitle\@title%
  \gdef\@title{\@@oldtitle\footnotetext{\emph{Key words and phrases.} #1.}}%
}
\begin{document}

\title{From M\"obius inversion to renormalisation}
\author{Joachim Kock\\
\footnotesize Universitat Aut\`onoma de Barcelona}
\date{}

\maketitle

\begin{abstract}
  This paper traces a straight line from classical
  M\"obius inversion to Hopf-algebraic perturbative renormalisation.
  This line, which is logical but not entirely historical, consists of
  just a few main abstraction steps, and some intermediate steps
  dwelled upon for mathematical pleasure.  The paper is largely
  expository, but contains many new perspectives on well-known
  results.  For example, the equivalence between the
  Bogoliubov recursion and the Atkinson formula is exhibited as a direct
  generalisation of the equivalence between the Weisner--Rota
  recursion and the Hall--Leroux formula for M\"obius inversion.
%
\end{abstract}

\subjclass[2010]{
81T15; 
11A25; 
16T10; 
16T15; 
}

\keywords{Perturbative renomalisation, M\"obius inversion, arithmetic 
function, coalgebra, bialgebra, Rota--Baxter operator}

\section*{Introduction}

The flavour of renormalisation concerning the present contribution is
the BPHZ renormalisation of perturbative quantum field theories,
introduced by Bogoliubov, Parasiuk, Hepp and Zimmermann (1955-1969),
and more precisely its Hopf-algebraic interpretation discovered by
Kreimer~\cite{Kreimer:9707029} in 1998.  Subsequent work of Connes, 
Kreimer~\cite{Connes-Kreimer:9808042,Connes-Kreimer:9912092},
Ebrahimi-Fard, Guo, Manchon and 
others~\cite{EbrahimiFard-Guo-Manchon:0602004,EbrahimiFard-Guo-Kreimer}, distilled the construction
into a piece of abstract algebra, involving characters of a Hopf
algebra with values in a Rota-Baxter algebra.  It has important
connections
with disparate subjects in pure mathematics, such as 
multiple zeta values, numerical integration, and stochastic 
analysis. The construction itself can be viewed from various
perspectives, such as that of Birkhoff decomposition and the
Riemann--Hilbert
problem~\cite{Connes-Kreimer:9808042,Connes-Kreimer:9912092}, the
Baker--Campbell--Hausdorff formula and Lie
theory~\cite{EbrahimiFard-Guo-Manchon:0602004}, or the abstract
viewpoint of 
filtered non-commutative Rota--Baxter
algebras~\cite{EbrahimiFard-Guo-Kreimer}.  There are excellent surveys of
these developments, such as
Ebrahimi-Fard--Kreimer~\cite{EbrahimiFard-Kreimer:0510} (focusing on
physical motivation), Manchon~\cite{Manchon:0408} (generous with
mathematical preliminaries on coalgebras and Hopf algebras), and the longer
survey of Figueroa and
Gracia-Bond\'ia~\cite{Figueroa-GraciaBondia:0408145} (particularly relevant
in the present context for exploiting also the combinatorial viewpoint of
incidence algebras).

The aim of the present expository paper is to derive the construction
as a direct generalisation of classical M\"obius inversion: after the 
abstraction steps from the classical M\"obius function 
via incidence algebras to abstract M\"obius 
inversion, the remaining step is just to add a Rota--Baxter operator 
to the formulae.  This is
very close in spirit to Kreimer's original
contribution~\cite{Kreimer:9707029}, where the counter-term was staged
as a twisted antipode, but the explicit interpretation in terms of
M\"obius inversion seems not to have been made before, and in any 
case deserves to be more widely known.  The
perspective is attractive for its simplicity, and leads to clean and
elementary proofs (and slightly more general results---bialgebras
rather than Hopf algebras).  For ampler perspectives and deeper 
connections to various areas of mathematics, we refer to the 
bibliography and the pointers given along the way.

\medskip

Before starting from scratch with M\"obius inversion in classical
number theory (\S \ref{sec:clas}), it is appropriate to begin in \S
\ref{sec:BPHZ} by indicating more precisely where we are going, with a
brief introduction to BPHZ renormalisation from an abstract viewpoint.
After M\"obius inversion for arithmetic functions in \S \ref{sec:clas}, we
move to M\"obius inversion in incidence algebras in \S 
\ref{sec:incidence}; we deal with
both posets and M\"obius categories.  In \S \ref{sec:Mabs} we
establish the abstract M\"obius inversion principle, for general
filtered coalgebras with the property that the zeroth piece is spanned
by group-like elements.  This is inspired by recent work on M\"obius
inversion in homotopical contexts.  Finally in \S \ref{sec:RB} we add
a Rota--Baxter operator to the abstract M\"obius inversion formulae.
This yields directly the Bogoliubov recursion of renormalisation, 
and simultaneously the Atkinson formula.

\section{Hopf-algebraic BPHZ renormalisation}
\label{sec:BPHZ}

Perturbative quantum field theory is concerned with expanding
the scattering matrix
into a sum over graphs.  The Feynman rules assign to each
graph of the theory an amplitude.  Unfortunately, for many graphs
with loops (non-zero first Betti number), the corresponding amplitude
is given by a divergent integral.
Renormalisation is the task of extracting meaningful finite values
from these infinities.

In the (modern account of the) BPHZ approach, the first step consists 
in introducing a formal parameter, the 
{\em regularisation parameter} $\epsilon$, in such a way that
the amplitudes no longer take values directly in the complex numbers 
but rather in the ring of Laurent series 
$\C[\epsilon^{-1},\epsilon]]$.
The amplitudes are now well defined:
the divergencies are expressed by series with a pole at
$\epsilon=0$.
The next step is to subtract counter-terms for `divergent' graphs.
The {\em minimal subtraction scheme} aims simply to subtract the pole
part, but the naive attempt---just subtracting the pole part for a given 
graph---turns out to be
too brutal, destroying important physical features of the Feynman 
rules.
The problem can be localised to the fact that a divergent graph may
itself have divergent subgraphs, and these sub-divergencies should be
sorted out first, before attempting at determining the counter-term
for the graph as a whole.  In the end, the correct procedure, found by
Bogoliubov and Parasiuk~\cite{Bogoliubov-Parasiuk} and fine-tuned and
proved valid by Hepp~\cite{Hepp:1966},
is a rather intricate recursive 
over-counting/under-counting procedure, of a flavour not unfamiliar to
combinatorists.  The development culminated with Zimmermann~\cite{Zimmermann}
finding a closed formula
for the counter-term, 
the famous {\em forest formula}, instead of a recursion.%
\footnote{Important as it is, the forest formula is not dealt with in
the present exposition, as it is not clear how it relates to {\em
general} M\"obius inversion; but see
\cite{Figueroa-GraciaBondia:0408145} and
\cite{Menous-Patras:1511.07403} for important insight in this
direction for certain special classes of Hopf algebras.}
One crucial property is that the renormalised Feynman
rule remains a character, just like the unrenormalised Feynman rule,
expressing the fundamental principle that the amplitude of two
independent processes is the product of the processes.  The
renormalised Feynman rule assigns to every graph a power series
without pole part, and the desired finite amplitude can finally be
obtained by setting $\epsilon$ to $0$.  This procedure, called BPHZ
renormalisation, is described in many textbooks on quantum field
theory and renormalisation (e.g.~\cite{Collins}, \cite{Muta}).

Kreimer's seminal discovery~\cite{Kreimer:9707029} is that the combinatorics 
in this procedure is encoded in a Hopf algebra of graphs $H$.
As a vector space, $H$ is spanned by 
all 1PI graphs of the given quantum field theory. 
The multiplication in $H$ is given by taking disjoint union of 
graphs.
The comultiplication $\Delta: H \to 
H\tensor H$ is given on connected 1PI graphs 
$\Gamma$ by
$$
\Delta(\Gamma) = \sum_{\gamma\subset \Gamma}  \gamma \tensor 
\Gamma/\gamma  ,
$$
where the sum is over all (superficially divergent) 1PI subgraphs
$\gamma$ (possibly not connected), and the quotient graph $\Gamma/\gamma$ is obtained by
contracting each connected component of $\gamma$ to a vertex (the
residue of $\gamma$).  Altogether, $H$ is a Hopf algebra, graded by 
loop number.
The regularised Feynman rules are characters $\phi:H \to A$ with
values in $A=\C[\epsilon^{-1},\epsilon]]$.  The
{\em BPHZ counter-term} $\phi_-$ is given by the recursive formula (for $\deg \Gamma>0$)
$$
\phi_-(\Gamma) = - R\big[ \phi(\Gamma) + \!\!\underset{\gamma\neq \emptyset, \gamma\neq\Gamma}{\sum_{\gamma\subset \Gamma}}
\phi_-(\gamma)\,\phi(\Gamma/\gamma)\big] ,
$$
or more conceptually:
$$
\phi_- \ = \ e \;-\; R\big[ \phi_- * (\phi\!-\!e)\big]   ,
$$
where $*$ is convolution of linear maps $H \to A$, where
$e$ is the neutral element for convolution,
and $R:A\to A$ is the
idempotent 
linear operator that to a Laurent series assigns its pole part.
The recursion is well founded, thanks to the grading of $H$: 
the convolution refers to taking out subgraphs via $\Delta$, and
the arguments to $\phi_-$ in the convolution are graphs
with strictly fewer loops than the input to $\phi_-$ on the left-hand 
side of the equation, 
since $(\phi-e)$ vanishes on graphs without loops.

The renormalised Feynman rule is finally given in terms of 
convolution%
\footnote{That $\phi_+$ can be written as a convolution was 
realised by Connes and Kreimer~\cite{Connes-Kreimer:9912092} (thus
exhibiting the renormalisation procedure as an instance of the general 
mathematical construction called Birkhoff decomposition). Previously 
$\phi_+$
was computed via an auxiliary construction known as Bogoliubov's 
preparation map.}
as
$$
\phi_+ := \phi_- * \phi .
$$
It takes values in $\Ker R = \C[[\epsilon]]$, so that it makes sense 
finally to set $\epsilon=0$ to obtain a finite amplitude for each 
graph.  The crucial fact that $\phi_-$ and $\phi_+$ are again characters
turns out to be a consequence of a special property of the operator 
$R$, namely the equation
\begin{equation}\label{RBintro}
R(x\cdot y) + R(x) \!\cdot\! R(y)= R\big( R(x) \!\cdot\! y + x 
\!\cdot\! R(y) 
\big) ,
\end{equation}
which is to say that $R$ is a {\em Rota--Baxter 
operator}.\footnote{Kreimer himself did isolate conditions on $R$
ensuring that $\phi_-$ and $\phi_+$ are again characters, but it was
Brouder who observed that these conditions can be formulated as a 
single ``multiplicativity constraint'', namely
\eqref{RBintro} (see \cite{Kreimer:9901099}, footnote 4); 
Connes and Kreimer~\cite{Connes-Kreimer:9912092} referred to this 
multiplicatitivity constraint.
Ebrahimi-Fard then pointed out that this constraint is the 
Rota--Baxter equation (the first published mention being 
\cite{EbrahimiFard:0207043}), and started to import results and 
methods from this 
mathematical theory.}


\medskip

The abstraction of these discoveries is the purely algebraic result that
for any graded Hopf algebra $H$ and for any commutative algebra $A$ with
a Rota--Baxter operator $R$ like this, the same procedure works to
transform a character $\phi:H \to A$ into another character $\phi_-$ such
that the convolution $\phi_+ := \phi_- * \phi$ takes values in the kernel
of $R$ (the abstraction of the property of being pole free).  This is 
the result we will arrive at in Section~\ref{sec:RB}, from the 
standpoint of M\"obius inversion.

\medskip

It must be stressed that this neat little piece of algebra is only a
minor aspect of perturbative renormalisation, as it does
not account for the analytic (or number-theoretic) aspects of Feynman
amplitudes, e.g.~the computation of the individual integrals.  The
merit of the Hopf-algebraic approach is rather to separate out the
combinatorics from the analysis, and explain it in a conceptual way.
It is also worth remembering that assigning a renormalised amplitude
to every graph is not the end of the story, because there are infinitely
many graphs (their number even grows factorially in the number of loops),
and in general the sum of all these finite amplitudes will 
still be a divergent series in the coupling constant. New techniques
are being applied to tackle this problem, such as resurgence 
theory (see for example~\cite{Dunne-Unsal}).  The present contribution
deliberately ignores all these analytic aspects.

\section{The classical M\"obius function}

\label{sec:clas}

\begin{blanko}{Arithmetic functions and Dirichlet series.}
  Write
$$
\N^\times =\{1,2,3,\ldots\}
$$
for the set of positive natural numbers.
An {\em arithmetic function} is just a function
$$
f: \N^\times \to \C
$$
(meant to encode some arithmetic feature of each number $n$).
To each arithmetic function $f$ one associates a {\em Dirichlet series}
$$
F(s) = \sum_{n\geq 1} \frac{f(n)}{n^s} ,
$$
thought of as a function defined on some open set of the complex plane.
The study of arithmetic functions 
in terms of their associated Dirichlet series is a central topic in 
analytic number theory~\cite{Apostol}. 
\end{blanko}

\begin{blanko}{The zeta function.}
  A fundamental example is the 
{\em zeta function}
\begin{eqnarray*}
  \zeta: \N^\times & \longrightarrow & \C  \\
  n & \longmapsto & 1.
\end{eqnarray*}
The associated Dirichlet series is the {\em Riemann zeta function}
$$
\zeta(s) = \sum_{n\geq 1} \frac1{n^s} .
$$
\end{blanko}

\begin{blanko}{Classical M\"obius inversion.\footnote{This is due to 
  M\"obius~\cite{Moebius:1832}, see Hardy and 
  Wright~\cite{Hardy-Wright}, Thm.~266.}}
  The classical M\"obius inversion principle says that
  \begin{eqnarray*}
    \text{if} \qquad f(n) & = & \sum_{d\divides n} g(d)\\
    \text{then} \quad g(n) & = & \sum_{d\divides n} f(d) \mu(n/d) ,
  \end{eqnarray*}
  where $\mu$ is the {\em M\"obius function}\footnote{According to Hardy and Wright~\cite{Hardy-Wright} (notes to Ch.~XVI), 
  the M\"obius function occurs implicitly in the work of Euler as 
  early as 1748.}
  \begin{equation}\label{muclas}
  \mu(n) = \begin{cases} 0 & \text{if \(n\) contains a square factor}\\
  (-1)^r & \text{if \(n\) is the product of $r$ distinct primes.}\end{cases}
  \end{equation}
\end{blanko}

\begin{blanko}{Example: Euler's totient function.}
  {\em Euler's totient function} is by definition
  $$
  \varphi(n) := \# \{ 1\leq k \leq n \mid (k,n)=1\} .
  $$
  It is not difficult to see that we have the relation
  $$
  n = \sum_{d\divides n} \varphi(d),
  $$
  so by M\"obius inversion we get a formula for $\varphi$:
  $$
  \varphi(n) = \sum_{d\divides n} d\; \mu(n/d).
  $$
\end{blanko}

\begin{blanko}{Dirichlet convolution.}
  A conceptual account of the M\"obius inversion principle is given 
  in terms of {\em Dirichlet convolution} for
  arithmetic functions:
  $$
  (f * g)(n) = \sum_{i\cdot j = n} f(i) g(j) ,
  $$
  which corresponds precisely to (pointwise) product of Dirichlet series.
  The neutral element for this convolution product is the arithmetic function
  $$
  \epsilon(n) = \begin{cases}
    1 & \text{if }n=1 \\
    0 & \text{else}.
  \end{cases}
  $$
  
  Now the M\"obius inversion principle reads more conceptually
  $$
  f = g * \zeta \qquad \Rightarrow  \qquad g = f * \mu ,
  $$
  and the content is this:
\end{blanko}
  
\begin{prop}
  The M\"obius function is the convolution inverse of the zeta function.
\end{prop}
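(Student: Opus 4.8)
The plan is to verify directly from the definitions that $\mu * \zeta = \epsilon$; since Dirichlet convolution is commutative (immediate from the symmetric formula $(f*g)(n)=\sum_{i\cdot j=n}f(i)g(j)$), this single identity already shows that $\mu$ is a two-sided convolution inverse of $\zeta$, hence "the" inverse in the statement.

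First I would unfold the convolution. Because $\zeta$ is the constant function $1$, we get $(\mu * \zeta)(n) = \sum_{i\cdot j = n}\mu(i)\,\zeta(j) = \sum_{d\divides n}\mu(d)$, so everything reduces to evaluating the divisor sum $\sum_{d\divides n}\mu(d)$ and checking it equals $\epsilon(n)$. For $n=1$ the only divisor is $1$, and $\mu(1)=1$ by \eqref{muclas} (empty product of primes, $r=0$), so the value is $1=\epsilon(1)$.

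For $n>1$, write $n = p_1^{a_1}\cdots p_r^{a_r}$ with $r\ge 1$ and the $p_i$ distinct primes. By \eqref{muclas} the term $\mu(d)$ vanishes unless $d$ is squarefree, so the only divisors contributing to the sum are the squarefree divisors $\prod_{i\in S}p_i$ indexed by subsets $S\subseteq\{1,\dots,r\}$, each contributing $(-1)^{|S|}$. Grouping by the size $k=|S|$ gives $\sum_{d\divides n}\mu(d) = \sum_{k=0}^{r}\binom{r}{k}(-1)^k = (1-1)^r = 0 = \epsilon(n)$, which completes the check.

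The only mathematical content here is that last combinatorial identity, and it is merely the binomial theorem, so there is no real obstacle: the proposition is essentially a repackaging of the classical definition \eqref{muclas} of $\mu$. One point worth recording for later use is uniqueness of convolution inverses: an arithmetic function $f$ is invertible for Dirichlet convolution precisely when $f(1)\neq 0$ (one solves for $f^{-1}(n)$ recursively on the divisors of $n$), so once $\mu*\zeta=\epsilon$ is verified, $\mu$ is genuinely \emph{the} inverse of $\zeta$, and the classical M\"obius inversion principle $f=g*\zeta \Rightarrow g=f*\mu$ follows by convolving with $\mu$.
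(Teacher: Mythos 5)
Your proof is correct: the reduction of $\mu*\zeta=\epsilon$ to the divisor sum $\sum_{d\divides n}\mu(d)$, the case split on $n=1$ versus $n>1$, the restriction to squarefree divisors, and the binomial-theorem evaluation $\sum_k\binom{r}{k}(-1)^k=0$ are all sound, and your closing remark on uniqueness of Dirichlet inverses (invertibility iff $f(1)\neq 0$) correctly justifies the definite article in the statement. However, this is not the route the paper takes. The paper states the proposition in \S\ref{sec:clas} without proof and explicitly defers all proofs to the abstract setting: there the identity is obtained either by checking that the closed formula \eqref{muclas} satisfies the general Weisner--Rota recursion $\mu=\epsilon-\mu*(\zeta-\epsilon)$, valid in any M\"obius category, or --- the proof the paper calls ``better'' --- by decomposing $\N^\times\simeq\prod_p(\N,+)$ via unique factorisation and taking the tensor product of the elementary M\"obius functions \eqref{MN} of each factor $(\N,+)$. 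Your direct computation is the classical elementary argument (Hardy--Wright, Thm.~263) and is entirely self-contained, which is its virtue; the paper's product argument buys a structural explanation of \emph{why} $\mu$ vanishes on non-squarefree numbers and contributes $(-1)^r$ otherwise (these are exactly the tensor products of the values $1,-1,0,0,\dots$ on each prime coordinate), and it generalises to the incidence-algebra and bialgebra settings that the rest of the paper is built on. It is worth noting that your grouping of squarefree divisors by subsets $S\subseteq\{1,\dots,r\}$ is really the product decomposition in disguise, so the two proofs are close cousins; but your version does not exhibit the recursion $\mu=\epsilon-\mu*(\zeta-\epsilon)$ that the paper's entire line of abstraction rests on.
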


\begin{blanko}{Example (continued).}
  Let $\iota$ denote the arithmetic
  function $\iota(n)=n$.  Its associated Dirichlet series is
  $$
  \sum_{n\geq 1} \frac n {n^s} = \zeta(s-1) .
  $$
  Restating the M\"obius inversion formula for Euler's totient $\varphi$ in terms of Dirichlet 
  convolution yields
  $$
  \iota = \varphi * \zeta \qquad \Rightarrow \qquad \varphi = \iota * \mu ,
  $$
  so that the Dirichlet series associated to $\varphi$ is
  $$
  \frac{\zeta(s-1)}{\zeta(s)} .
  $$
\end{blanko}

\section{Incidence algebras}

\label{sec:incidence}

In the 1930s, M\"obius inversion was applied in group theory by
Weisner~\cite{Weisner:1935} and independently by
Hall~\cite{Hall:1936}.\footnote{Hall defined and computed {\em 
Eulerian functions} of groups using M\"obius inversion in subgroup 
lattices.
For cyclic groups, this recovers Euler's totient function.} 
Both were motivated by the lattice of
subgroups of a finite group, but found it worth developing the theory
more generally; Weisner for complete lattices, Hall for finite posets.

In the 1960s,
Rota~\cite{Rota:Moebius}
systematised the theory extensively, in the setting of locally finite 
posets, and made M\"obius inversion a 
central tool in enumerative combinatorics.  The setting of posets is now
widely
considered the natural context for M\"obius inversion (see for 
example 
Stanley's book~\cite{Stanley:volI}).  Cartier and
Foata~\cite{Cartier-Foata} developed the theory for monoids with the
finite-decomposition property, and Leroux~\cite{Leroux:1976} unified
these contexts in the general notion of M\"obius category, reviewed below.
More recently, Lawvere and Menni~\cite{Lawvere-Menni} and 
G\'alvez, Kock, and
Tonks~\cite{Galvez-Kock-Tonks:1512.07573,Galvez-Kock-Tonks:1512.07577,
Galvez-Kock-Tonks:1512.07580} took Leroux's ideas further into category 
theory and homotopy theory.\footnote{Lawvere and Menni~\cite{Lawvere-Menni} gave an
`objective' version of Leroux's theory: this means working with the
combinatorial objects themselves instead of the vector spaces they
span.  The classical theory is obtained by taking cardinality.  One
advantage of this approach---beyond making all proofs natively
bijective---is that one can eliminate finiteness conditions, if just 
one refrains from taking cardinality: the
constructions work the same with infinite sets, and at this level, M\"obius
inversion works for {\em any} category, not just M\"obius categories.
More recently, G\'alvez, Kock and
Tonks~\cite{Galvez-Kock-Tonks:1512.07573,Galvez-Kock-Tonks:1512.07577,
Galvez-Kock-Tonks:1512.07580} discovered that simplicial objects more
general than categories admit incidence algebras and M\"obius
inversion, and passed to the homotopical context of simplicial
$\infty$-groupoids.  Where categories express the general ability to 
compose, their notion of {\em decomposition space} expresses the general ability to 
decompose, in a appropriate manner so as to induce a coassociative 
incidence coalgebra, and an attendant M\"obius inversion principle.
There are plenty of examples in combinatorics of
coalgebras and bialgebras which are the incidence coalgebra of a
decomposition space but not of a category or a poset.  An example
relevant to the present context is the Connes--Kreimer Hopf algebra of
rooted trees~\cite{Connes-Kreimer:9808042}, which is the incidence bialgebra of a decomposition 
space but not directly of a category~\cite{Galvez-Kock-Tonks:1512.07573}.}
These abstract developments were crucial for distilling out the 
perspectives of the present contribution.

\bigskip

We briefly recall the notions of incidence algebras and M\"obius 
inversion for posets and M\"obius categories.  All proofs will be 
deferred to the abstract setting of Section~\ref{sec:Mabs}.
Throughout, $\ground$ denotes a ground field, `linear' means 
$\ground$-linear, and $\tensor$ is short for $\tensor_\ground$.

\begin{blanko}{The incidence (co)algebra of a locally finite posets.}
  A poset $(P,\leq)$ is called  {\em locally finite} if all its intervals
  $[x,y] := \{ z\in P : x\leq z \leq y \}$ are finite. 
  The free vector space $C_P$
  on the set of intervals becomes a coalgebra $(C_P,\Delta,\epsilon)$ with
  comultiplication $\Delta: C_P \to C_P \tensor C_P$ defined by
  $$
  \Delta([x,y]) := \sum_{z\in [x,y]} [x,z] \tensor [z,y]
  $$
  and counit $\epsilon:C_P\to \ground$ defined as
  $$
  \epsilon([x,y]) := \begin{cases}
    1 & \quad \text{if } x=y \\
    0 & \quad \text{else. }
  \end{cases}
  $$

  The {\em incidence algebra} of $P$
  is the convolution algebra of $C_P$ (with values in the 
  ground field).  The multiplication is thus given by
  $$
  (\alpha * \beta) ([x,y]) = \sum_{z\in[x,y]} 
  \alpha([x,z])\beta([z,y]) ,
  $$
  and the unit is $\epsilon$.
\end{blanko}

\begin{blanko}{The zeta function.}
  The {\em zeta function} is defined as
  \begin{eqnarray*}
    \zeta: C_P & \longrightarrow & \ground  \\
    {}[x,y] & \longmapsto & 1 .
  \end{eqnarray*}
  (Note that this function is constant on the set of intervals, but of course not
  constant on the vector space spanned by the intervals.)
\end{blanko}

\begin{blanko}{Theorem (Rota~\cite{Rota:Moebius}\footnote{The result 
  was essentially proved already by Weisner~\cite{Weisner:1935} (but only for 
  complete lattices) and by Hall~\cite{Hall:1936} (but only for finite 
  posets).}).}\label{muRota}
  {\em 
  For any locally finite poset, 
  the zeta function is convolution invertible; its 
  inverse, called the {\em M\"obius function}
  $\mu:=\zeta^{-1}$, is given by the recursive formula
  $$
  \mu([x,y]) = \underset{\underset{z\neq y}{z\in [x,y]} 
  \phantom{xxxxxxxxxxxxxxxxxxxxxx}}{\begin{cases}
    1 & \qquad\text{if } x=y \\[6pt]
    \displaystyle{-
	\ \sum \ \ \mu([x,z])}
	& \qquad\text{if } x<y . 
  \end{cases}}
  $$}%
This is a recursive definition by length of intervals, well founded 
  because of the condition $z\neq y$.
\end{blanko}

\begin{cor}
  We have
    $$
  f = g * \zeta \qquad \Rightarrow  \qquad g = f * \mu .
  $$
In other words,
  \begin{eqnarray*}
    \text{if} \qquad f([x,y]) & = & \sum_{z\in[x,y]} g([x,z])\\
    \text{then} \quad g([x,y]) & = & \sum_{z\in[x,y]} f([x,z]) 
	\mu([z,y]) .
  \end{eqnarray*}
\end{cor}

In fact, Rota proved more:
\begin{blanko}{Theorem (Rota~\cite{Rota:Moebius}).}\label{psiRota}
  {\em $\phi :C_P\to \ground$ is convolution invertible provided 
  $\phi([x,x])=1$ for 
  all $x\in P$; the convolution inverse $\psi$ is determined by the 
  recursive formula
  $$
  \psi([x,y]) = \underset{\underset{z\neq y}{z\in [x,y]} 
  \phantom{xxxxxxxxxxxxxxxxxxxxxxxxxxxxxx|}}{\begin{cases}
    1 & \qquad\text{if } x=y \\[6pt]
    \displaystyle{-
	\ \sum \ \ \psi([x,z])} \,\phi([z,y])
	& \qquad\text{if } x<y . 
  \end{cases}}
  $$
  }
  
The recursion can be written more compactly as
$$\boxed{
\psi \ = \ \epsilon \;-\; \psi * ( \phi \!-\! \epsilon ) }
$$
Indeed, subtracting $\epsilon$ from $\phi$ inside the sum expresses 
the
fact that we don't want the last summand ($z=y$), and adding the term 
$\epsilon$ outside the sum expresses the first case ($x=y$).
\end{blanko}

\begin{blanko}{M\"obius categories (Leroux).}
  Leroux~\cite{Leroux:1976} introduced the common generalisation of
  locally finite posets and Cartier--Foata monoids: {\em M\"obius
  categories}.  Recall that posets and monoids are special cases of
  categories:  a poset can be considered as a category whose objects
  are the elements of the poset, and in which there is an arrow from
  $x$ to $y$ if and only if $x\leq y$ in the poset.  This means that
  arrows now play the role of intervals, and splitting intervals
  becomes factorisation of arrows.  A monoid can be considered as a
  category with only one object, the arrows being then the monoid
  elements, composed by the monoid multiplication.  The finiteness
  conditions for posets and monoids now generalise as follows.  A
  category is {\em M\"obius} if every arrow admits only finitely many
  non-trivial decompositions (of any length).  For a M\"obius category
  $\CC$, the set of arrows $\CC_1$ form a linear basis of its {\em incidence coalgebra},
  where the
  comultiplication of an arrow is the set of all its (length-$2$)
  factorisations
  $$
  \Delta(f) := \sum_{b\circ a = f} a \tensor b  ,
  $$
  immediately generalising the comultiplication of intervals of a 
  poset.  The counit $\epsilon: \CC_1 \to \ground$ sends identity 
  arrows to $1$ and all other arrows to $0$ (again exactly as the 
  case of intervals in a poset).

  The {\em incidence algebra} is the convolution algebra of this 
  coalgebra.
  In here, 
  the {\em zeta function} is the function $\CC_1 \to 
  \ground$ sending every arrow to $1$.  Note that $\epsilon$ is the 
  neutral element for convolution.
\end{blanko}

\begin{theorem}
  (Content--Lemay--Leroux~\cite{Content-Lemay-Leroux})
  For $\CC$ a M\"obius category, the zeta function is convolution 
  invertible with inverse
  $$
  \mu = \Phi_{\operatorname{even}} - \Phi_{\operatorname{odd}} .
  $$
  Here $\Phi_{\operatorname{even}}(f)$ is the number of even-length 
  chains of arrows composing to $f$ (not allowing identity arrows), 
  and similarly for $\Phi_{\operatorname{odd}}$.
\end{theorem}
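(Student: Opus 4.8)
The plan is to verify directly that $\mu := \Phi_{\operatorname{even}} - \Phi_{\operatorname{odd}}$ is a two-sided convolution inverse of $\zeta$, which at once establishes both invertibility and the formula. First I would introduce, for $n\geq 0$, the $n$-fold convolution power $\Phi_n := (\zeta-\epsilon)^{*n}$, with the convention $\Phi_0 := \epsilon$, as an element of the incidence algebra. A one-step induction on $n$, unwinding the definition of the convolution product, identifies $\Phi_n(f)$ with the number of chains $f = g_n\circ\cdots\circ g_1$ of non-identity arrows composing to $f$: the base cases are immediate ($\Phi_0 = \epsilon$; and $\Phi_1 = \zeta-\epsilon$ assigns $1$ to every non-identity arrow and $0$ to identities), while the inductive step just reads off $\bigl(\Phi_n * (\zeta-\epsilon)\bigr)(f) = \sum_{b\circ a = f}\Phi_n(a)\,(\zeta-\epsilon)(b)$, which pairs an $n$-chain composing to some $a$ with a non-identity arrow $b$ satisfying $b\circ a = f$, i.e. an $(n{+}1)$-chain composing to $f$. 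Summing the even- and odd-indexed powers separately then gives $\Phi_{\operatorname{even}} - \Phi_{\operatorname{odd}} = \sum_{n\geq 0}(-1)^n \Phi_n$, so the claim becomes $\bigl(\sum_{n\geq 0}(-1)^n\Phi_n\bigr)$ is $\zeta^{-1}$.

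The point that makes this sum meaningful is Leroux's M\"obius hypothesis: for a fixed arrow $f$ there are only finitely many non-trivial decompositions of $f$ of any length, so these have some bounded length $N_f$, whence $\Phi_n(f) = 0$ for all $n > N_f$. Thus $\mu(f)$ is a finite sum for every $f$, and the rearrangements below are legitimate. The verification is then a telescoping computation: writing $\zeta = \epsilon + (\zeta-\epsilon)$ and using $\Phi_n * \epsilon = \Phi_n$ together with $\Phi_n * (\zeta-\epsilon) = \Phi_{n+1}$,
\[
\mu * \zeta \;=\; \sum_{n\geq 0}(-1)^n\Phi_n \;+\; \sum_{n\geq 0}(-1)^n\Phi_{n+1} \;=\; \sum_{n\geq 0}(-1)^n\Phi_n \;-\; \sum_{m\geq 1}(-1)^m\Phi_m \;=\; \Phi_0 \;=\; \epsilon ,
\]
and symmetrically $\zeta * \mu = \epsilon$, using $(\zeta-\epsilon)*\Phi_n = \Phi_{n+1}$ in place of the convolution on the right. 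Hence $\zeta$ is convolution invertible with $\zeta^{-1} = \mu = \Phi_{\operatorname{even}} - \Phi_{\operatorname{odd}}$.

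The only genuine obstacle is the finiteness bookkeeping of the second paragraph: one must be slightly careful that the M\"obius condition as phrased (``finitely many non-trivial decompositions of any length'') really does bound the \emph{length} of such decompositions for each fixed arrow, so that $\mu$ is a bona fide element of the incidence algebra and the interchange of infinite sums with the convolution product is valid. Once the relevant sums are seen to be pointwise finite, all the remaining manipulations are formal. (Alternatively, one may deduce the statement from the abstract M\"obius inversion principle established in Section~\ref{sec:Mabs}, by observing that the $\mu$ defined above satisfies $\mu = \epsilon - \mu * (\zeta-\epsilon)$, which is exactly the recursion characterising $\zeta^{-1}$; but the chain-counting argument is more transparent and exhibits the combinatorial content of the formula.)
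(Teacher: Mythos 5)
Your proof is correct and is essentially the paper's own argument: the theorem is proved there via the Main Construction of Section~\ref{sec:Mabs}, which is exactly the alternating sum of convolution powers $(\zeta-\epsilon)^{*n}$, with local finiteness supplied by the M\"obius/filtration condition and invertibility obtained by the same telescoping (phrased there as subtracting the even and odd recursion equations to get $\psi = e - \psi*(\phi-e)$ and then rearranging to $\psi*\phi=e$). The only difference is that you run the computation directly in the incidence algebra of $\CC$ and justify pointwise finiteness from Leroux's condition, whereas the paper works once and for all in the abstract filtered-coalgebra setting and specialises.
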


This alternating-sum formula goes back to Hall~\cite{Hall:1936},
and was also exploited by Cartier and 
Foata~\cite{Cartier-Foata}.\footnote{It is important also
because of its relation to Euler characteristic.
For example, for a finite poset $P$ with a minimal and a maximal 
element added, the alternating-sum formula for the M\"obius function
coincides with the usual formula for Euler characteristic of the
order complex of $P$ (see Stanley~\cite{Stanley:volI}).}
We shall give a slick proof of it in
the abstract setting of the next 
section, where we shall also relate it to
Rota's recursive formula
\begin{equation}\label{recagain}
\mu = \epsilon - \mu * (\zeta-\epsilon)  ,
\end{equation}
valid in any M\"obius category.

\begin{blanko}{Example.}
  In the incidence algebra of the monoid $(\N,+,0)$,
  the M\"obius function is
  \begin{equation}\label{MN}
  \mu(n) = \begin{cases}
    1 & \qquad \text{if } n=0 \\
    -1 & \qquad \text{if } n=1 \\
    0 & \qquad\text{if } n>1 .
  \end{cases}
  \end{equation}
  This is easily proved by checking that this function 
  satisfies the recursion of the general formula~\eqref{recagain}.
  
Hence the inversion principle says in this case
  \begin{eqnarray*}
    \text{if} \qquad f(n) & = & \sum_{k\leq n} g(k)\\
    \text{then} \quad g(n) & = & f(n)-f(n-1) .
  \end{eqnarray*}
  In other words, convolution with the M\"obius function is Newton's
  (backward) finite-difference operator.  So convolution with $\mu$ acts
  as `differentiation' while convolution with $\zeta$
  acts as `integration'.
  If we interpret the sequences $a:\N\to\ground$ as formal power series,
  then the zeta function  is the geometric series, while the M\"obius
  function is $1-x$, as follows from \eqref{MN}.
\end{blanko}

\begin{blanko}{Example.}
  For the monoid $\N^\times$, the incidence algebra is the classical
  algebra of arithmetic functions under Dirichlet convolution,
  recovering the classical M\"obius inversion principle as in 
  Section~\ref{sec:clas}.  Again, the
  closed formula~\eqref{muclas} for the M\"obius function
  can be established easily by simply showing that it
  satisfies the general recursive formula.  A better
  proof explores the fact that the incidence algebra of a product (of M\"obius
  categories) is the tensor product of the incidence algebras, and
  that the M\"obius function of a product is the tensor product of
  M\"obius functions.  Now it follows from unique factorisation of
  primes that $\N^\times$ is the (weak\footnote{Weak means that only
  finitely many factors are allowed to be non-trivial.}) product
  $$
  \N^\times \simeq \prod_p (\N,+)
  $$
  identifying a number
  $
  n = \prod_{p} p^{r_p}  \in \N^\times
  $
  with the infinite vector $(r_2,r_3,r_5,\ldots) \in \prod_p \N$.
  The classical formula~\eqref{muclas} for the M\"obius function now 
  follows as the product 
  of infinitely many copies of 
  the M\"obius function in \eqref{MN}.\footnote{This fact also gives a nice 
  proof of the Euler product expansion 
  (see~\cite[Thm.~280]{Hardy-Wright})
$$
\zeta(s) = \prod_{p \text{ prime}} \frac1 {1-\brok{1}{p^s}}
$$
from 1737 (almost a hundred years earlier than Dirichlet and 
M\"obius).}
\end{blanko}

\begin{blanko}{Example: powersets --- the inclusion-exclusion principle.}
  Let $X$ be a fixed finite set, and consider the powerset of $X$, i.e.~the set
  $\PP(X)$ of all subsets of $X$.  It is a poset under the inclusion relation
  $\subset$.  An interval in $\PP(X)$ is given by a pair of nested subsets of 
  $X$, say $T\subset S$. 
  If the cardinality of $X$ is $n$, then clearly $\PP(X)$ is
  isomorphic as a poset to $\fat{2}^n$, where $\fat 2$ denotes the 
  $2$-element poset
  $[0,1] \subset (\N,\leq)$, so it follows from
  \eqref{MN} and the product rule that
  the M\"obius function on $\PP(X)$ is given by
  $$
  \mu(T\subset S) = (-1)^{\norm{S\!-\!T}} .
  $$
  
  This is the inclusion-exclusion  principle.  As an example of this, consider
  the problem of counting {\em derangements}, i.e.~permutations without 
  fixpoints.  Since every permutation of a set $S$ determines a subset $T$ of points
  which are actually moved, we can write
  $$
  \operatorname{perm}(S) = \sum_{T\subset S} \operatorname{der}(T) 
  $$
  (with the evident notation).
  Hence by M\"obius inversion, we find the formula for derangements
  $$
  \operatorname{der}(S) = \sum_{T\subset S} (-1)^{\norm{S\!-\!T}}
  \operatorname{perm}(T) ,
  $$
  which is a typical inclusion-exclusion formula.
\end{blanko}

\section{Abstract M\"obius inversion}
\label{sec:Mabs}

For background on coalgebras, bialgebras and Hopf algebras, a standard 
reference is 
Sweedler~\cite{Sweedler}.  The little background needed here is amply 
covered also in \cite{Manchon:0408}.

\begin{blanko}{Coalgebras.}
  Let $(C,\Delta,\epsilon)$ be a filtered coalgebra.
Recall that a {\em filtration} of a coalgebra is an increasing sequence
of sub-coalgebras
$$
C_0 \subset C_1 \subset C_2 \subset \cdots  = C
$$
such that
$$
\Delta(C_n) \subset \sum_{p+q=n} C_p \tensor C_q  ,
$$
and recall that an element $x\in C$ is {\em group-like} when 
$\Delta(x) = x\tensor x$;  this implies $\epsilon(x)=1$.
It follows that group-like elements are always of filtration degree 
zero.  We make the following standing assumption 
(see~\cite{Kock:1411.3098}):
\begin{equation}\label{C0-cond}
  \text{\em We assume that $C_0$ is spanned by group-like elements.}
  \end{equation}
\end{blanko}

\begin{blanko}{Convolution algebras.}
  If $(C,\Delta,\epsilon)$ is a coalgebra and $(A,m,u)$ is an algebra, then 
  the space of linear maps $\Lin(C,A)$ becomes
  an algebra under the convolution product: for
  $\alpha, \beta \in \Lin(C,A)$, define $\alpha * \beta$ to be the composite
  $$
  C \stackrel {\Delta} \longrightarrow C \tensor C  \stackrel{ 
  \alpha \tensor \beta} \longrightarrow A \tensor A \stackrel {m} 
  \longrightarrow A,
  $$
  that is, in Sweedler notation~\cite{Sweedler}:
  $$
  (\alpha * \beta) ( x) = \sum_{(x)} \alpha(x_{(1)})\beta(x_{(2)}).
  $$
  The unit for the convolution product is
  $$
  e:=u \circ \epsilon .
  $$
\end{blanko}

\begin{theorem}\label{thm:psiC}\hspace*{-4pt}\footnote{I do not know of any 
  reference for this result.  It may be new, but is in any case a 
  straightforward abstraction of the theorems of Rota and 
  Content--Lemay--Leroux already quoted, once the degree-zero
  condition~\eqref{C0-cond} has been identified~\cite{Kock:1411.3098}.}
  If $\phi\in  \Lin(C,A)$ sends all group-like elements to $1$, then
  $\phi$ is convolution invertible.  The inverse $\psi$ is given by the 
  recursive formula
  \begin{equation}\label{eq:recurs}
  \psi \ = \ e \;-\; \psi * (\phi\!-\!e) .
  \end{equation}
\end{theorem}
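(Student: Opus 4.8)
The plan is to read the recursive formula \eqref{eq:recurs} as a definition of $\psi$ by induction on the filtration degree, and then to check almost for free that the resulting $\psi$ is a two-sided convolution inverse of $\phi$.

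First I would record the one place where the hypotheses enter. Since $C_0$ is spanned by group-like elements and $\phi$ sends every group-like $g$ to $1 = u(1) = u(\epsilon(g)) = e(g)$, linearity gives $\phi|_{C_0} = e|_{C_0}$; that is, $N := \phi - e$ vanishes on $C_0$. Now take $x \in C_n$ and write $\Delta(x) = \sum_{p+q=n} w_{p,q}$ with $w_{p,q} \in C_p \tensor C_q$, using the filtration compatibility $\Delta(C_n) \subseteq \sum_{p+q=n} C_p \tensor C_q$. Applying $m \circ (\psi \tensor N)$, each summand with $q = 0$ is killed because $N|_{C_0} = 0$, so
\[
(\psi * N)(x) \;=\; \sum_{\substack{p+q=n\\ q\ge 1}} \bigl(m\circ(\psi\tensor N)\bigr)(w_{p,q}) ,
\]
which only involves the values of $\psi$ on $C_{n-1}$ (and is $0$ when $n=0$). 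Hence \eqref{eq:recurs}, namely $\psi(x) = e(x) - (\psi*N)(x)$, expresses $\psi|_{C_n}$ in terms of $\psi|_{C_{n-1}}$; reading it along the exhausting filtration $C_0 \subset C_1 \subset \cdots$ produces a unique linear map $\psi\colon C \to A$ (the stages are mutually consistent because $\Delta$ and the formula are the same at each one).

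It remains to verify invertibility. Rearranging \eqref{eq:recurs} to $\psi * N = e - \psi$ gives at once
\[
\psi*\phi \;=\; \psi*(e+N) \;=\; \psi + \psi*N \;=\; \psi + (e-\psi) \;=\; e ,
\]
so $\psi$ is a left convolution inverse. The same argument applied to the mirror recursion $\psi' = e - N * \psi'$ (well founded for the symmetric reason, since $N$ now kills the \emph{left} tensor factor in degree $0$) yields a unique $\psi'$ with $\phi * \psi' = e$; then associativity forces $\psi = \psi*(\phi*\psi') = (\psi*\phi)*\psi' = \psi'$, so $\psi$ is a genuine two-sided inverse, given by \eqref{eq:recurs}.

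The only real obstacle is the first step: making rigorous that the right-hand side of the recursion, evaluated in degree $n$, depends only on strictly lower degrees. It rests entirely on combining $\Delta(C_n) \subseteq \sum_{p+q=n} C_p \tensor C_q$ with the vanishing of $N = \phi - e$ on $C_0$, i.e.\ on the standing assumption \eqref{C0-cond} together with the hypothesis on $\phi$. (Equivalently, one checks $N^{*k}|_{C_{k-1}} = 0$ by the same induction, so that the geometric series $\psi = \sum_{k\ge 0} (-1)^k N^{*k}$ converges pointwise and visibly solves \eqref{eq:recurs}; but the inductive reading is cleaner and hands us the mirror statement for free.)
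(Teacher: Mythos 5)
Your proof is correct, and its engine is the same as the paper's: the filtration compatibility of $\Delta$ combined with the vanishing of $\phi-e$ on $C_0$ (the standing assumption~\eqref{C0-cond} plus the hypothesis on $\phi$) makes the recursion well founded degree by degree, and rearranging $\psi = e - \psi*(\phi-e)$ gives $\psi*\phi=e$. The organisation differs, however. The paper takes as primary the explicit alternating series $\psi=\sum_{n\ge 0}(-1)^n(\phi-e)^{*n}$ (the Main Construction, i.e.\ the Hall--Leroux-style even--odd splitting), proves local finiteness by the same degree argument you use, and then \emph{derives} the recursion from the series; two-sidedness comes from noting that the same series satisfies the mirror recursion, since $(\phi-e)^{*n}$ is unambiguous by associativity of convolution. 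You instead take the recursion itself as the definition, by induction on filtration degree (with the series relegated to a closing remark), and obtain two-sidedness from uniqueness of the solution to each one-sided recursion together with the standard ``a left inverse and a right inverse must coincide'' argument. Your route is a little more economical; the paper's route buys the closed even--odd formula as an equal partner of the recursion, which is the point of the whole exposition, since in Section~\ref{sec:RB} these two become the Atkinson formula and the Bogoliubov recursion respectively. One small presentational caveat: the decomposition $\Delta(x)=\sum_{p+q=n}w_{p,q}$ is not unique, so the cleanest phrasing of your induction step is that $(\psi*N)(x)$ is unchanged when $\psi$ is altered off $C_{n-1}$, because every $C_p\otimes C_q$ with $p+q=n$ has either $p\le n-1$ or $q=0$; this is a wording issue, not a gap.
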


We shall give a slick proof consisting mostly of
definitions.\footnote{The proof ingredients go a long way back.  The
even-odd splitting was first used by Hall~\cite{Hall:1936} for
complete lattices, then by Cartier--Foata~\cite{Cartier-Foata} for
monoids, and finally by
Content--Lemay--Leroux~\cite{Content-Lemay-Leroux} for M\"obius
categories, and further exploited in \cite{Lawvere-Menni} and
\cite{Galvez-Kock-Tonks:1512.07577}.  The recursive formula goes back
to Weisner~\cite{Weisner:1935}.  The combined proof is inspired by
\cite{Carlier-Kock}.}

\begin{blanko}{Main Construction.}\label{psi}
  Put $\psy_0 := e$ and
  $$
  \psy_{n+1} := \psy_n * (\phi\!-\!e).
  $$ 
  Put also 
  $$\psy_{\operatorname{even}} := \sum_{n \text{ even}} \psy_n
  \qquad \text{ and }\qquad  \psy_{\operatorname{odd}} := \sum_{n
  \text{ odd}} \psy_n .
  $$
  Finally put 
  $$\boxed{
  \psi :=  \psy_{\operatorname{even}} - \psy_{\operatorname{odd}}.
  }
  $$
  In other words, $\psi = \sum_{n\geq 0} (-1)^n (\phi\!-\!e)^{*n}$.
  This is an infinite sum of functions, but it is nevertheless well 
  defined, because for every input, only finitely many terms in the 
  sum are non-zero.  Indeed, given an element $x\in C$ of filtration 
  degree $r$, then for  $n>r$ the $n$-fold convolution power of the
  $(\phi\!-\!e)$ involves the $n$-fold comultiplication of $x$,
  and since $n>r$ at least one of these factors must be of degree $0$, and 
  hence is killed by $(\phi\!-\!e)$, thanks to the standing 
  assumption~\eqref{C0-cond}.
  
  Now from $\psy_{n+1} = \psy_n * (\phi\!-\!e)$
  we get
  $$
  \psy_{\operatorname{odd}} = \psy_{\operatorname{even}} * (\phi\!-\!e)
  \quad \text{ and }\quad
  \psy_{\operatorname{even}} = e + \psy_{\operatorname{odd}} * 
  (\phi\!-\!e) ,
  $$
  and subtracting these two equations we arrive finally at 
  the formula
  $$\boxed{
  \psi \ = \ e \;-\; \psi * (\phi\!-\!e)
  }$$
  of the theorem.
\end{blanko}

\begin{proof}[Proof of Theorem~\ref{thm:psiC}.]
  First of all, the recursive formula is meaningful: since $\phi$ 
  sends group-like elements to $1$, it agrees with $e$ in filtration
  degree $0$ (thanks to the standing 
  assumption~\eqref{C0-cond}).  Therefore, in the convolution product on the right-hand
  side, $\psi$ is only evaluated on elements of filtration degree 
  strictly less than the element given on the left-hand side.
  Rearranging terms gives $\psi*\phi = e$, showing that $\psi$ is an inverse 
  on the left.
  
  All the arguments can be repeated with $\psy_{n+1} = (\phi\!-\!e) * 
  \psy_n$ (instead of $\psy_{n+1} = \psy_n * (\phi\!-\!e)$), arriving at the right-sided formula $\psi \; = \; e \,-\, 
   (\phi\!-\!e)* \psi$, and rearrangement of the terms shows now that $\psi$ is also an inverse on the right.
\end{proof}

\section{M\"obius inversion in bialgebras}

\label{sec:bialg}

Suppose now that $B$ is a bialgebra, still assumed to be filtered,
and still assumed to have $B_0$ spanned by group-like elements.
Recall that a bialgebra is simultaneously a coalgebra and an algebra,
enjoying in particular the compatibility
\begin{equation}\label{Deltaxy}
  \Delta(x\cdot y) = \Delta(x) \cdot \Delta(y) \qquad \forall x,y .
\end{equation}

For the target algebra $A$, we must now assume it is commutative.
This is used in the proof of Lemma~\ref{mult} below, and 
the rest of the paper depends on that lemma.

\begin{blanko}{Multiplicativity.}
  Call a linear map $\phi: B\to A$ {\em multiplicative}\footnote{Note: in number theory, for arithmetic functions
  $\alpha:\N^\times\to\C$, the word `multiplicative' is used for
  something else, namely the condition $\alpha(mn) = \alpha(m)
  \alpha(n)$ for all $m$ and $n$ relatively prime.  The notions
  are not directly related, because $\N^\times$ is not a bialgebra
  for the usual multiplication:
  for example, $\Delta(2\cdot 2)$ has three terms whereas 
  $\Delta(2)\Delta(2)$ has four terms.
  } if it preserves 
  multiplication:
  $$
  \phi(x\cdot y) = \phi(x) \cdot \phi(y)  \qquad \forall x,y .
  $$
\end{blanko}

\begin{lemma}\label{mult}
  The convolution of two multiplicative functions is again a 
  multiplicative function.  In particular, multiplicative functions 
  form a monoid.
\end{lemma}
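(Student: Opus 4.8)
The plan is to verify the defining identity $(\phi * \psi)(x\cdot y) = (\phi*\psi)(x)\cdot(\phi*\psi)(y)$ directly in Sweedler notation, for multiplicative $\phi,\psi \in \Lin(B,A)$. First I would expand the left-hand side by the definition of convolution, $(\phi*\psi)(x\cdot y) = \sum_{(x\cdot y)} \phi\big((x\cdot y)_{(1)}\big)\,\psi\big((x\cdot y)_{(2)}\big)$, and then rewrite the comultiplication of the product using the bialgebra compatibility~\eqref{Deltaxy}: since $\Delta(x\cdot y) = \Delta(x)\cdot\Delta(y)$, the sum becomes $\sum_{(x),(y)} \phi(x_{(1)} y_{(1)})\,\psi(x_{(2)} y_{(2)})$. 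Applying multiplicativity of $\phi$ and of $\psi$ turns each summand into $\phi(x_{(1)})\,\phi(y_{(1)})\,\psi(x_{(2)})\,\psi(y_{(2)})$. At this point I would invoke commutativity of $A$ to interchange the two middle factors, $\phi(y_{(1)})\,\psi(x_{(2)}) = \psi(x_{(2)})\,\phi(y_{(1)})$, after which each summand factors as $\big(\phi(x_{(1)})\psi(x_{(2)})\big)\cdot\big(\phi(y_{(1)})\psi(y_{(2)})\big)$; summing over $(x)$ and $(y)$ independently then yields exactly $(\phi*\psi)(x)\cdot(\phi*\psi)(y)$.

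For the ``in particular'' clause, I would observe that the convolution unit $e = u\circ\epsilon$ is itself multiplicative, because the counit of a bialgebra is an algebra homomorphism: $e(x\cdot y) = u(\epsilon(x)\epsilon(y)) = u(\epsilon(x))\cdot u(\epsilon(y)) = e(x)\cdot e(y)$. Combined with the associativity of the convolution product (which holds for the convolution algebra of any coalgebra with values in any algebra) and with the closure property just established, this shows that the multiplicative linear maps $B\to A$ form a submonoid of $\big(\Lin(B,A),\,*,\,e\big)$.

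The proof has no real obstacle: it is a short formal computation whose only non-bookkeeping ingredient is the commuting of the two middle factors. That single step is exactly where commutativity of $A$ enters, and pinpointing it is the main thing worth saying here, since it explains why the hypothesis on $A$ imposed before Lemma~\ref{mult} cannot be dropped. One should also take care to state the computation with genuinely distinct summation indices for $x$ and $y$ (rather than reusing one Sweedler index), so that the final re-grouping into a product of two independent sums is legitimate.
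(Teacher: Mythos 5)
Your proof is correct and follows essentially the same route as the paper's: expand $(\phi*\psi)(xy)$ in Sweedler notation via the bialgebra compatibility $\Delta(xy)=\Delta(x)\Delta(y)$, apply multiplicativity of the two factors, and use commutativity of $A$ to reorder the middle terms. Your extra remark that $e=u\circ\epsilon$ is itself multiplicative (justifying the monoid claim) is a small but welcome addition that the paper leaves implicit.
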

\begin{proof}
  This follows immediately from the bialgebra axiom~\eqref{Deltaxy}:
  by expansion in Sweedler notation we have on one hand
  \begin{align*}
  (\alpha *\beta)(xy) &\stackrel{\eqref{Deltaxy}}= \sum_{(x),(y)} \alpha(x_{(1)}y_{(1)})
  \beta(x_{(2)}y_{(2)}) 
  \\
  &= \sum_{(x),(y)} \alpha(x_{(1)}) 
  \alpha(y_{(1)}) \beta(x_{(2)}) 
  \beta(y_{(2)}),
  \end{align*}
  (assuming that $\alpha$ and $\beta$ are multiplicative),
  and on the other hand
  $$
  (\alpha *\beta)(x) (\alpha *\beta)(y) = \sum_{(x),(y)}
  \alpha(x_{(1)}) \beta(x_{(2)}) 
  \alpha(y_{(1)}) 
  \beta(y_{(2)}) .
  $$
  Since $A$ is assumed commutative, these two expressions are equal.
\end{proof}

  Note that multiplicative functions do not form a linear subspace, 
  as the sum of two multiplicative functions is rarely multiplicative.

\begin{lemma}\label{threeterms}
  As before, assume $\phi:B\to A$ sends group-like elements to $1$ and is 
  multiplicative.  Then for any 
  $\alpha:B\to A$ 
  multiplicative we have
  $$
  (\alpha * \phie) (xy) = (\alpha * \phie)(x) \alpha(y) + 
  \alpha(x) (\alpha * \phie)(y) 
  + (\alpha * \phie)(x) (\alpha * 
  \phie)(y) ,
  $$
  where for short we use the temporary notation $\phie := (\phi-e)$.
\end{lemma}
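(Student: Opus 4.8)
The plan is to compute both sides directly in Sweedler notation, using the bialgebra compatibility \eqref{Deltaxy} to expand $\Delta(xy)$, and then to exploit the multiplicativity of $\alpha$ to factor things apart. First I would expand the left-hand side:
\[
(\alpha * \phie)(xy) = \sum_{(xy)} \alpha((xy)_{(1)})\,\phie((xy)_{(2)}) \stackrel{\eqref{Deltaxy}}{=} \sum_{(x),(y)} \alpha(x_{(1)}y_{(1)})\,\phie(x_{(2)}y_{(2)}) = \sum_{(x),(y)} \alpha(x_{(1)})\alpha(y_{(1)})\,\phie(x_{(2)}y_{(2)}),
\]
where the last step uses that $\alpha$ is multiplicative. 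The key remaining input is the behaviour of $\phie = \phi - e$ on a product. Since $\phi$ is multiplicative, $\phi(x_{(2)}y_{(2)}) = \phi(x_{(2)})\phi(y_{(2)})$; and since $e = u\circ\epsilon$ with $\epsilon$ an algebra map, $e(x_{(2)}y_{(2)}) = e(x_{(2)})e(y_{(2)})$. Hence
\[
\phie(x_{(2)}y_{(2)}) = \phi(x_{(2)})\phi(y_{(2)}) - e(x_{(2)})e(y_{(2)}).
\]
The natural move is to write $\phi = \phie + e$ on each factor, expand the product $(\phie+e)(x_{(2)})\,(\phie+e)(y_{(2)})$, and subtract $e(x_{(2)})e(y_{(2)})$; this leaves exactly the three cross-terms $\phie(x_{(2)})e(y_{(2)}) + e(x_{(2)})\phie(y_{(2)}) + \phie(x_{(2)})\phie(y_{(2)})$.

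Substituting this back and regrouping the Sweedler sums, I would split the single sum over $(x),(y)$ into three sums according to these three terms. In the first, $\sum \alpha(x_{(1)})\phie(x_{(2)}) \cdot \sum \alpha(y_{(1)})e(y_{(2)})$ factors as $(\alpha*\phie)(x)\cdot(\alpha*e)(y)$; since $e$ is the convolution unit, $(\alpha*e)(y) = \alpha(y)$, giving the term $(\alpha*\phie)(x)\,\alpha(y)$. The second sum symmetrically gives $\alpha(x)\,(\alpha*\phie)(y)$, and the third gives $(\alpha*\phie)(x)\,(\alpha*\phie)(y)$. Summing the three yields precisely the claimed identity. (Commutativity of $A$ is what allows the reshuffling of $\alpha(x_{(1)})$, $\alpha(y_{(1)})$, and the $\phie$-factors into the products displayed above; this is the same point that makes Lemma~\ref{mult} work, and indeed the identity can be read as the "error term" in the multiplicativity of $\alpha*\phi$.)

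The main obstacle — really the only subtlety — is the bookkeeping: one must be careful that the expansion of $\phie(x_{(2)}y_{(2)})$ uses multiplicativity of $\phi$ (valid by hypothesis) \emph{and} that $e$ is multiplicative (valid since $\epsilon$ is an algebra homomorphism), and then that splitting off $e$ on one side collapses a convolution factor to the identity. No induction or filtration argument is needed here; everything is a finite manipulation of Sweedler sums. I would present it as two displayed computations — the expansion of the left side, and the reassembly into the three terms — with a one-line remark that $A$ commutative is what licenses the rearrangement.
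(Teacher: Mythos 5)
Your proposal is correct and matches the paper's proof in substance: both arguments hinge on substituting $\phi-e$ for $\phie$ and reducing everything to the multiplicativity of $\alpha$, $\phi$, and $e$. The only cosmetic difference is that the paper invokes Lemma~\ref{mult} to treat $\alpha*\phi$ as multiplicative and then verifies the identity at the level of values, whereas you unfold the convolution in Sweedler notation and redo that computation inline (correctly noting that commutativity of $A$ is what permits the regrouping); both routes are fine.
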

\begin{proof}
  This is simply linearity: substitute $\phi-e$ for $\phie$, expand both sides 
  of the equation, and use multiplicativity of $\alpha$,  $\phi$, and 
  $\alpha*\phi$ (thanks to Lemma~\ref{mult}).
\end{proof}

\begin{prop}\label{psimult}
  Suppose $\phi$ sends group-like elements to $1$, and let $\psi$ 
  denote its convolution inverse.  If $\phi$ is multiplicative,
  then so is $\psi$.
\end{prop}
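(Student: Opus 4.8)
The plan is to deduce multiplicativity of $\psi$ by induction on filtration degree, leaning on the boxed recursion $\psi = e - \psi*(\phi\!-\!e)$ of Theorem~\ref{thm:psiC} together with the three-term identity of Lemma~\ref{threeterms}. Throughout I abbreviate $\phie := \phi - e$ as in Lemma~\ref{threeterms}, and I record two preliminary facts. First, since $\phi$ agrees with $e$ on $B_0$ (the standing assumption~\eqref{C0-cond}), so does $\psi$: for $z\in B_0$ one has $\Delta(z)\in B_0\tensor B_0$, hence $\phi(z_{(2)})=e(z_{(2)})$, hence $\psi(z)=(\psi*e)(z)=(\psi*\phi)(z)=e(z)$. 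Second, a product of group-like elements is group-like (apply $\Delta(xy)=\Delta(x)\Delta(y)$), so $B_0\cdot B_0\subseteq B_0$; in particular $\psi$ restricted to $B_0$ equals $e$, which is multiplicative, and that is the base case of the induction (degree $0$).

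For the inductive step I fix $x,y\in B$ and assume $\psi(x'y')=\psi(x')\psi(y')$ whenever $\deg(x')+\deg(y')<\deg(x)+\deg(y)$. Evaluating the recursion gives $\psi(xy)=e(xy)-(\psi*\phie)(xy)$. I would expand $(\psi*\phie)(xy)$ in Sweedler notation, using the bialgebra axiom $\Delta(xy)=\Delta(x)\Delta(y)$ and choosing the representatives $x_{(1)},x_{(2)},y_{(1)},y_{(2)}$ homogeneous for the filtration (possible since $\Delta(B_p)\subseteq\sum_{a+b=p}B_a\tensor B_b$). In any term with $\phie(x_{(2)}y_{(2)})\neq 0$ one has $x_{(2)}y_{(2)}\notin B_0$, so by $B_0\cdot B_0\subseteq B_0$ at least one of $x_{(2)},y_{(2)}$ has positive degree, which forces $\deg(x_{(1)})+\deg(y_{(1)})<\deg(x)+\deg(y)$; on such terms the induction hypothesis replaces $\psi(x_{(1)}y_{(1)})$ by $\psi(x_{(1)})\psi(y_{(1)})$. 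This is precisely the point where the argument could look circular — Lemma~\ref{threeterms} with $\alpha=\psi$ seems to presuppose multiplicativity of $\psi$ — but the factor $\phie$ only ever feeds $\psi$ strictly-lower-degree inputs, so the computation in the proof of Lemma~\ref{threeterms} goes through verbatim with $\alpha=\psi$, yielding
$$
(\psi*\phie)(xy) \ = \ (\psi*\phie)(x)\,\psi(y) \ + \ \psi(x)\,(\psi*\phie)(y) \ + \ (\psi*\phie)(x)\,(\psi*\phie)(y) .
$$
Here commutativity of $A$ is used to regroup the Sweedler sum over $(x),(y)$ as a product of a sum over $(x)$ and a sum over $(y)$.

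It then remains to substitute the identity $(\psi*\phie)(z)=e(z)-\psi(z)$ — which is just the recursion itself, valid for every $z$ and hence not circular — into the displayed formula for $z=x$ and $z=y$; the right-hand side collapses to $e(x)e(y)-\psi(x)\psi(y)$, so $\psi(xy)=e(xy)-e(x)e(y)+\psi(x)\psi(y)=\psi(x)\psi(y)$ since $e$ is multiplicative, closing the induction. The only genuine obstacle is bookkeeping the circularity just described; once one sees that $\phie$ strips off at least one positive-degree tensor factor, everything is routine. (If one prefers to avoid the induction altogether, an alternative is to put $\Phi:=m_A\circ(\phi\tensor\phi)$ and $\Psi:=m_A\circ(\psi\tensor\psi)$ on the coalgebra $B\tensor B$: commutativity of $A$ gives $\Phi*\Psi=\Psi*\Phi=e_{B\tensor B}$, while the bialgebra axiom makes $m\colon B\tensor B\to B$ a coalgebra map, so $\psi\circ m$ is also a two-sided convolution inverse of $\phi\circ m=\Phi$; uniqueness of inverses forces $\psi\circ m=\Psi$, i.e.\ $\psi(xy)=\psi(x)\psi(y)$. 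The inductive route, however, is the one that matches the machinery assembled above.)
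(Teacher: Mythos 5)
Your main argument is correct and is essentially the paper's own proof: induction on the degree of $xy$, applying Lemma~\ref{threeterms} with $\alpha=\psi$ (justified because $\phie$ only feeds $\psi$ strictly lower-degree inputs) and then unwinding via the recursion~\eqref{eq:recurs}; your extra bookkeeping of the $e$-terms and of the non-circularity is just a more explicit version of the paper's parenthetical remark. The alternative sketched in your final parenthesis (uniqueness of the convolution inverse of $\phi\circ m = m_A\circ(\phi\tensor\phi)$ in $\Lin(B\tensor B,A)$) is a genuinely different and equally valid route, but since you only offer it as an aside, the proof as written matches the paper's.
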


\begin{proof}
  The proof goes by induction on 
  the degree of $xy$.  If both $x$ and $y$ are 
  group-like (i.e.~degree $0$), it is clear that 
  $\psi(xy)=1=\psi(x)\psi(y)$.
  Now for the induction step.  We use the shorthand notation $\phie 
  := \phi-e$.  First use the recursive 
  formula~\eqref{eq:recurs}:
  $$
  \psi(xy) = - (\psi * \phie)(xy) .
  $$
  Now by induction, the $\psi$ inside the convolution is 
  multiplicative, because its arguments are all  of 
  lower degree (the only case of equal degree in the left-hand 
  tensor factor corresponds to degree $0$ in the right-hand tensor 
  factor, which is
  killed by $\phie=\phi\!-\!e$), so 
  we can apply Lemma~\ref{threeterms}:
  $$= 
  - (\psi * \phie)(x) \psi(y) - 
  \psi(x) (\psi * \phie)(y) 
  - (\psi * \phie)(x) (\psi * 
  \phie)(y) ,
  $$
  and then the recursive equation~\eqref{eq:recurs} backwards (four times):
  $$
  = \psi(x)\psi(y) +
  \psi(x)\psi(y) -
  \psi(x)\psi(y) =  \psi(x)\psi(y).
  $$ 
\end{proof}

\begin{blanko}{Antipodes for bialgebras.}
  Recall that a filtered bialgebra $B$ is {\em connected} if $B_0$
  is spanned by the unit, and that any  connected bialgebras is 
  Hopf~\cite{Sweedler}.
  We shall call $B$ {\em not-quite-connected}~\cite{Kock:1411.3098}
  in the situation where $B_0$ is spanned by group-like 
  elements.
  A notion of antipode for not-quite-connected bialgebras was 
  introduced recently by Carlier and Kock~\cite{Carlier-Kock}.
  It specialises to the usual antipode in the case of a connected 
  bialgebra, and in any case it still serves to compute the M\"obius function as
  $\mu = \zeta \circ S$ as for Hopf algebras.  
  
  In fact, the antipode $S$ itself is an example of 
  abstract M\"obius inversion, as we now proceed to explain.
The idea is simply that one can use the bialgebra $B$ itself as
algebra of values, and invoke abstract M\"obius inversion in
$\Lin(B,B)$.  The identity $B\to B$ does not in general admit a convolution
inverse, because it does not send all group-like elements to $1$.  But
if we just fix that artificially then we can give it as input to the
general construction, and the outcome will be the antipode $S$ in the
sense of \cite{Carlier-Kock}.

To this end, we need to choose $B_+$, a linear complement to $B_0 \subset B$,
and we need to choose it inside $\Ker \epsilon$.
(Note that if the filtration is actually a grading, then 
$B_+$ is canonical, namely the span of all homogeneous elements of 
positive degree.  In practice, $B$ is often of combinatorial nature and
a basis is already given.)
Define the linear operator $T:B\to B$ by
$$
T(x) = 
\begin{cases}
  1 & \text{ if \(x\) group-like } \\
  x & \text{ if \(x \in B_+\)}  .
  \end{cases}
$$
Now apply the Main Construction~\ref{psi}, writing $S$ instead of $\psi$:
$$
S_0 := e, \qquad 
S_{n+1} = S_n * (T-e), \qquad S 
:= S_{\operatorname{even}}-S_{\operatorname{odd}} ,
$$
arriving at the recursion
$$
S \ =\ e \;-\; S * (T\!-\!e) .
$$
By the general M\"obius inversion Theorem~\ref{thm:psiC},
$S$ is the convolution 
inverse to $T$.
But the great feature of this $S$ is that it can invert `anything', by
precomposition.  Precisely:
\end{blanko}

\begin{prop}
  Suppose $\phi: B \to A$ takes group-like elements to $1$, and let 
  $\psi$ denote its convolution inverse. 
  If $\phi$ is multiplicative, then 
  $$
  \psi = \phi \circ S .
  $$
\end{prop}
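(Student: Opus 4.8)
The plan is to show that $\phi\circ S$ is a convolution inverse to $\phi$ and then invoke uniqueness of inverses. The key observation is that the operator $T$ on $B$ satisfies $\phi\circ T = \phi$: indeed, $T$ only alters the action on group-like elements by sending them to the unit $1\in B$, and $\phi(1)=1$ while $\phi$ also sends every group-like element to $1$ by hypothesis; on $B_+$ the operator $T$ is the identity. So precomposition with $T$ leaves $\phi$ unchanged.

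The main step is then to check that precomposition with $S$ is compatible with convolution, in the sense that
$$
\phi \circ (S * (T-e)) \ = \ (\phi\circ S) * (\phi - e) .
$$
Here the hard part — and the place where multiplicativity of $\phi$ is essential — is that convolution in $\Lin(B,B)$ uses the comultiplication of $B$, and to push $\phi$ through one needs $\phi(x_{(1)}\cdot S(x_{(2)})) $ to factor as $\phi(x_{(1)})\cdot\phi(S(x_{(2)}))$, i.e.\ one needs $\phi$ multiplicative. Concretely, for $\alpha,\beta\in\Lin(B,B)$ and $x\in B$ one has $\phi\big((\alpha*\beta)(x)\big) = \phi\big(\sum_{(x)} \alpha(x_{(1)})\cdot\beta(x_{(2)})\big) = \sum_{(x)} \phi(\alpha(x_{(1)}))\,\phi(\beta(x_{(2)})) = \big((\phi\circ\alpha)*(\phi\circ\beta)\big)(x)$, using multiplicativity of $\phi$ in the middle step. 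Thus $\phi\circ(-)$ is an algebra homomorphism $\Lin(B,B)\to\Lin(B,A)$ (it also sends the unit $e = u\circ\epsilon$ of $\Lin(B,B)$ to the unit $e = u\circ\epsilon$ of $\Lin(B,A)$, since $\phi(1)=1$).

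Applying this homomorphism to the recursion $S = e - S*(T-e)$ and using $\phi\circ T = \phi$ gives
$$
\phi\circ S \ = \ e \;-\; (\phi\circ S) * (\phi - e) ,
$$
which is exactly the recursion~\eqref{eq:recurs} characterising the convolution inverse $\psi$ of $\phi$. By the well-foundedness already established in the proof of Theorem~\ref{thm:psiC} (the grading/filtration makes the recursion determine its solution uniquely), we conclude $\phi\circ S = \psi$. I expect the only genuine subtlety to be bookkeeping: verifying that $\phi\circ(-)$ respects the convolution unit and that the degree-zero/group-like behaviour of $T$ is exactly what is needed for $\phi\circ T = \phi$; everything else is the algebra-homomorphism property plus uniqueness. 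One may note in passing that this also re-proves Proposition~\ref{psimult}, since $S$ is multiplicative (being built from $T$, which is multiplicative, via convolution and Lemma~\ref{mult}) and a composite of multiplicative maps is multiplicative.
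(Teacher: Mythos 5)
Your argument is correct, and it rests on exactly the same two observations as the paper's proof: that $\phi\circ T=\phi$ (since $\phi$ sends every group-like element to $1$, it cannot tell a group-like element from the unit), and that $\phi\circ(-)$ intertwines convolution, $\phi\circ(\alpha *_B\beta)=(\phi\circ\alpha)*_A(\phi\circ\beta)$, which is precisely where multiplicativity of $\phi$ enters. Where you diverge is only in the concluding step: the paper computes directly
$$
\phi*(\phi\circ S)=(\phi\circ T)*(\phi\circ S)=\phi\circ(T*S)=\phi\circ(\eta_B\circ\epsilon)=e ,
$$
using the already-established fact that $S$ is the convolution inverse of $T$, and then identifies $\phi\circ S$ with $\psi$ by uniqueness of convolution inverses; you instead push $\phi$ through the defining recursion $S=e-S*(T\!-\!e)$ to show that $\phi\circ S$ satisfies the recursion~\eqref{eq:recurs} characterising $\psi$, and conclude by the well-foundedness argument from the proof of Theorem~\ref{thm:psiC}. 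Both routes are equally short; yours avoids invoking $T*S=e$ but needs the (true, and implicitly established) uniqueness of solutions to the recursion.

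One genuine error, though only in your closing aside: the claim that this also re-proves Proposition~\ref{psimult} because ``$S$ is multiplicative, being built from $T$ via convolution and Lemma~\ref{mult}'' does not work. First, Lemma~\ref{mult} requires the target algebra to be commutative, whereas here the target is $B$ itself. Second, $T$ need not be multiplicative once $B$ has group-like elements other than the unit: for $B=\ground G\tensor H$ with $G$ a nontrivial group and $H$ connected, taking $B_+=\ground G\tensor H_+$, one has $T\big((g\tensor 1)(g'\tensor h)\big)=gg'\tensor h$ while $T(g\tensor 1)\cdot T(g'\tensor h)=g'\tensor h$ for $h\in H_+$ and $g\neq 1$. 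Strike that remark and the proof stands.
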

\begin{proof}
  We calculate
  $$
  \phi *_A (\phi\circ S)  \stackrel{(1)}=
  (\phi \circ T) *_A (\phi\circ S)  \stackrel{(2)}=
  \phi \circ (T *_B S)  \stackrel{(3)}=
  \phi \circ (\eta_B \circ \epsilon)  \stackrel{(4)}=
  \eta_A \circ \epsilon   = e  .
  $$
  Here (1) holds because $\phi$ takes all group-like elements to $1$,
  and $T$ only replaces general group-like elements by 
  the particular group-like element $1$.  Step (2) follows immediately
  from the assumption that $\phi$ is multiplicative. Step (3) is the 
  fact
  that $S$ is convolution inverse to $T$, and step (4) is the fact
  that $\phi$ is unital.
\end{proof}

It is obviously an important property that for multiplicative functions,
M\"obius inversion can be given uniformly by precomposition with the
antipode.  
For this reason,
algebraic combinatorics gradually shifted emphasis from M\"obius inversion
to antipodes~\cite{Schmitt:antipodes}---when they
are available. 
However, we shall see that it is M\"obius inversion that 
generalises to renormalisation, not the antipode.

\normalsize

\section{Direct-sum decomposition and renormalisation}

\label{sec:RB}

Coming back to the case of a coalgebra $C$,
the M\"obius inversion principle says that for every linear function
$\phi:C\to A$ (taking value $1$ on the group-like elements) there
exists another linear function $\psi:C\to A$ that convolves it to the neutral
$e$.

Sometimes one is interesting in less drastic transformations.  For
example, given a linear subspace $K \subset A$, is it possible to
convolve $\phi$ into $K$?  This question is precisely what BPHZ
renormalisation answers: in this case, $A$ is an algebra of
`amplitudes', $K$ is a subalgebra of `finite amplitudes', and  the
result of convolving a map $\phi: C \to A$ into $K$ is {\em
renormalisation}.  
In detail the set-up is the following.

\begin{blanko}{A decomposition problem.}
  Suppose we have a decomposition
$$
A = A_+ \oplus A_-
$$
of $A$ into a direct sum of vector spaces.
Let $R: A \to A$
denote projection\footnote{In \ref{bialgdecomp} below we shall impose the 
Rota--Baxter axiom.} onto $A_-$ relatively to this direct-sum 
decomposition, so that $A_+ = \Ker R$.

Given $\phi : C \to A$ (sending group-like elements to $1$) find
another $\psi: C \to A$ such that $\psi * \phi$ takes values in to $A_+$,
or at least maps $\Ker \epsilon$ to $A_+$.
In other words, find $\psi$ such that $R(\psi*\phi)(x) = 0$, for all 
$x\in \Ker \epsilon$.

This problem can be approached precisely as in the M\"obius inversion 
case (which is the case where $A_-=A$ and $R$ is the identity map).
The only change required is to define a modified  convolution product 
$*_R$ on $\Lin(C,A)$, defined as\footnote{This modified convolution 
product should not be confused with the so-called {\em double product} in the
non-commutative algebra $\Lin(C,A)$, defined as $\alpha 
*_R \beta := R(\alpha)*\beta+\alpha *R(\beta)+\alpha * \beta$.
The double product (in the case where $R$ is Rota--Baxter) plays a role in Lie-theoretic aspects of 
renormalisation~\cite{EbrahimiFard-Guo-Manchon:0602004}.}
$$
\alpha *_R \beta := R(\alpha*\beta) .
$$
Note that $*_R$ is generally neither associative nor 
unital, but  none of these two properties are 
needed in the following main construction.
\end{blanko}

\begin{blanko}{Main Construction.}
  Put $\psy_0 := e$ and 
  $$
\psy_{n+1} := \psy_n *_R (\phi\!-\!e) .
$$
(Note that since $*_R$ is not associative, this is {\em not} the same as 
$(\phi\!-\!e) *_R \psy_n$.  It is important here that all the parentheses
are pushed left.)
As in the classical case, put
  $$\psy_{\operatorname{even}} := \sum_{n \text{ even}} \psy_n
\qquad \text{ and }\qquad  \psy_{\operatorname{odd}} := \sum_{n 
\text{ odd}} \psy_n , 
$$
   and finally 
   \begin{equation}\label{phievenodd}
\boxed{\psi :=  \psy_{\operatorname{even}} -
  \psy_{\operatorname{odd}}}
  \end{equation}
  Just as in the classical case, these are locally finite sums.  This is a 
  consequence of the filtration of $C$---the argument is not affected 
  by the fact that the convolution has been modified.
  
  Now from $\psy_{n+1} = \psy_n *_R (\phi\!-\!e)$
  we get
  $$
  \psy_{\operatorname{odd}} = \psy_{\operatorname{even}} *_R (\phi\!-\!e)
  \quad \text{ and }\quad
  \psy_{\operatorname{even}} = \psy_0 + \psy_{\operatorname{odd}} *_R 
  (\phi\!-\!e) ,
  $$
  and subtracting these two equations we arrive finally at
  the formula
  \begin{equation}\label{REC}
	\boxed{
  \psi \ = \ e \;-\; \psi *_R (\phi\!-\!e)
  }
  \end{equation}
\end{blanko}
  
\begin{lemma}\label{psigr}
  $\psi$ sends group-like elements to $1$.
\end{lemma}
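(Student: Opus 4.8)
The plan is to exploit the one fact that makes everything collapse at a group-like element: the map $\phi - e$ vanishes there. Indeed, if $x$ is group-like then $\epsilon(x) = 1$, so $e(x) = 1$; and by our standing hypothesis $\phi(x) = 1$ as well; hence $(\phi - e)(x) = 0$. Moreover $\Delta(x) = x \tensor x$, so for \emph{any} linear map $\alpha : C \to A$ one has $(\alpha * (\phi - e))(x) = \alpha(x)\cdot(\phi-e)(x) = 0$, and therefore $\big(\alpha *_R (\phi-e)\big)(x) = R(0) = 0$ too.

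Given this, I would argue directly from the recursion~\eqref{REC}: evaluating both sides at a group-like $x$ and applying the observation with $\alpha = \psi$ gives
$$
\psi(x) = e(x) - \big(\psi *_R (\phi-e)\big)(x) = 1 - 0 = 1 ,
$$
as claimed. There is no circularity to worry about, even though $\Delta(x) = x\tensor x$ forces $\psi$ to reappear on the right evaluated at $x$ itself: that term is multiplied by $(\phi-e)(x) = 0$ and drops out whatever its value. (Equivalently, one can run the argument through the explicit sum~\eqref{phievenodd}: a trivial induction shows $\psy_n(x) = 0$ for every $n \ge 1$ --- the step being $\psy_{n+1}(x) = R\big(\psy_n(x)(\phi-e)(x)\big) = 0$, either factor sufficing --- so only the $n=0$ term $\psy_0 = e$ survives and $\psi(x) = e(x) = 1$.)

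I expect essentially no obstacle here. The only point deserving a moment's care is that the non-associativity and non-unitality of $*_R$ are irrelevant, because we only ever evaluate at a single group-like element, where the comultiplication is diagonal and the convolution reduces to a plain product in $A$ with a zero factor; this is exactly why the proof is word-for-word that of the unmodified case in Theorem~\ref{thm:psiC}, with $R$ inserted harmlessly up front.
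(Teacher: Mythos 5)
Your proof is correct and is exactly the paper's argument, just written out in full: the paper also deduces the claim from the recursion~\eqref{REC}, noting that $\phi-e$ kills group-like elements so the convolution term drops out. The extra remarks (no circularity, the alternative via the sum~\eqref{phievenodd}) are sound but not needed.
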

\begin{proof}
  This is clear from \eqref{REC} since $\phi-e$ kills group-like elements.
\end{proof}

\begin{lemma}
  For all $x\in \Ker\epsilon$, we have 
  \begin{enumerate}
    \item[(i)] $\psi(x) \in \Im R =A_-$
  
    \item[(ii)] $(\psi*\phi)(x) \in \Ker R=A_+$.
  \end{enumerate}
  If we assume $1\in A_+$, then (ii) holds for all $x\in C$.
\end{lemma}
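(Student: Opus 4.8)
The plan is to derive everything from the recursion \eqref{REC}, which unwinds to $\psi = e - R\bigl(\psi * (\phi - e)\bigr)$. Since the modified product $*_R$ is neither associative nor unital, I would manipulate only the ordinary convolution $*$ and apply $R$ once at the end, exactly as \eqref{REC} is written.

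Part (i) is then immediate: evaluating \eqref{REC} at $x \in \Ker\epsilon$ and using that $e = u\circ\epsilon$ vanishes on $\Ker\epsilon$, one gets
\[
  \psi(x) \;=\; -\,R\bigl((\psi*(\phi-e))(x)\bigr) \;\in\; \Im R = A_- ,
\]
which also records an identity to be reused. For part (ii), the key observation is that $\psi * \phi = \psi * (\phi - e) + \psi * e = \psi * (\phi - e) + \psi$, since $e$ is the convolution unit. So, for $x \in \Ker\epsilon$, writing $w := (\psi*(\phi-e))(x)$, the identity from (i) gives $(\psi*\phi)(x) = w + \psi(x) = w - R(w)$; applying $R$ and using that $R$ is a projection ($R^2 = R$) yields $R\bigl((\psi*\phi)(x)\bigr) = R(w) - R(w) = 0$, i.e.\ $(\psi*\phi)(x) \in \Ker R = A_+$.

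For the last assertion, assume $1 \in A_+$ (the unit of $A$). On any group-like element $g$ one has $\Delta(g) = g \otimes g$, hence $(\psi*\phi)(g) = \psi(g)\,\phi(g) = 1\cdot 1 = 1 \in A_+$, by Lemma~\ref{psigr} and the hypothesis on $\phi$. Given an arbitrary $x \in C$, fix a group-like element $g_0$ (one exists whenever $C \neq 0$, by~\eqref{C0-cond}) and write $x = \epsilon(x)\,g_0 + y$ with $y := x - \epsilon(x)\,g_0 \in \Ker\epsilon$. Then linearity of $\psi * \phi$ and part (ii) give $(\psi*\phi)(x) = \epsilon(x)\cdot 1 + (\text{element of }A_+) \in A_+$, because $A_+$ is a linear subspace and $1 \in A_+$.

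There is no substantial obstacle here: the whole argument is bookkeeping with \eqref{REC} together with the idempotency of $R$. The one point demanding care is not to ``expand'' $\psi *_R (\phi-e)$ prematurely---it must be kept in the form of $R$ applied to an honest convolution, both because $*_R$ is non-associative and because the cancellation in (ii) hinges precisely on pulling out that single outer $R$.
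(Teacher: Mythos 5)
Your proof is correct and follows essentially the same route as the paper's: part (i) reads $\psi(x)=-R\bigl((\psi*(\phi-e))(x)\bigr)$ off the recursion \eqref{REC}, and part (ii) rearranges that recursion so that the outer $R$ kills $(\psi*\phi)(x)$ via idempotency, with the final assertion handled by splitting $x$ into a group-like part and a part in $\Ker\epsilon$. The only cosmetic difference is that you carry out the cancellation pointwise with $R^2=R$, where the paper rearranges at the level of functions using $R(\psi(x))=\psi(x)$; these are the same computation.
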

\begin{proof}
  Assuming $x\in \Ker\epsilon$,
  the first statement is obvious from \eqref{REC}.
  It follows that we have $R(\psi(x)) = \psi(x)$.
  Rearranging the terms of the recursive equation~\eqref{REC}, we see that
  $$
  \psi *_R \phi = \psi + e - \psi *_R e = \psi + e - R(\psi) .
  $$
  For $x\in \Ker\epsilon$, the right-hand side is zero, whence
  the second statement.  If not $x\in \Ker\epsilon$ then we can 
  assume $x$ group-like, and then $(\psi*\phi)(x) =1$ by
  Lemma~\ref{psigr}.  So then (ii) follows from the alternative 
  assumption $1\in A_+$.
\end{proof}
   
\begin{blanko}{Bialgebra case, Rota--Baxter equation, and multiplicativity.}\label{bialgdecomp}
  For a bialgebra instead of coalgebra, it is natural to demand that
  $\psi$ be multiplicative, provided $\phi$ is so.  To achieve this,
  it turns out one should just demand the direct-sum decomposition $A=
  A_+ \oplus A_-$ to be multiplicative, in the sense that both $A_+$
  and $A_-$ are subalgebras (although not unital subalgebras).
\end{blanko}

\begin{lemma}[Atkinson~\cite{Atkinson}]
  To give such a subalgebra decomposition $A=A_+\oplus A_-$ is
  equivalent to giving an idempotent linear operator $R: A \to A$
  satisfying the 
  {\em Rota--Baxter equation}:\footnote{The equation is more generally written $\theta R(xy) + R(x) R(y)
  = R\big( R(x) y + x R(y)$ for a fixed scalar  weight $\theta$, in 
  order to accommodate the 
  $\theta\!=\!0$ case, which is the equation satisfied by integration by 
  parts.  The equation relevant presently is thus the weight-$1$ 
  Rota--Baxter equation, according to the classical convention.  More 
  recent sources (including~\cite{EbrahimiFard-Guo}, 
  \cite{EbrahimiFard-Manchon}, \cite{EbrahimiFard-Manchon-Patras})  
  tend to use the opposite convention, where the 
  $\theta R(xy)$-term is on the other side of the equation, and the
  weight relevant to BPHZ recursion is thus instead called weight $-1$.}
\begin{equation}\label{RB}
R(x\cdot y) + R(x) \!\cdot\! R(y)= R\big( R(x) \!\cdot\! y + x 
\!\cdot\! R(y) 
\big)  \qquad \forall x,y  .
\end{equation}
\end{lemma}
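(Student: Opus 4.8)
The plan is to split the statement into a routine linear-algebra observation and a short computation. First I would note that giving an idempotent linear operator $R:A\to A$ is the same data as giving a direct-sum decomposition $A=\Ker R\oplus\Im R$, with $R$ the projection onto $\Im R$ along $\Ker R$ (standard: $x = (x-R(x)) + R(x)$, and $R$ restricted to $\Im R$ is the identity). Under this correspondence we set $A_+=\Ker R$ and $A_-=\Im R$, and what is left to prove is that $A_+$ and $A_-$ are both subalgebras if and only if $R$ satisfies the Rota--Baxter equation~\eqref{RB}.

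For the implication from~\eqref{RB} to the subalgebra property: if $x,y\in\Ker R$, substituting $R(x)=R(y)=0$ into~\eqref{RB} yields $R(xy)=0$, so $xy\in\Ker R$; if instead $x,y\in\Im R$, then $x=R(x)$ and $y=R(y)$ by idempotency, so the left-hand side of~\eqref{RB} is $R(xy)+xy$ while the right-hand side is $R\big(R(x)\,y + x\,R(y)\big)=R(2xy)=2R(xy)$, whence $xy=R(xy)\in\Im R$. Thus both $\Ker R$ and $\Im R$ are closed under multiplication.

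For the converse, assume $A_+$ and $A_-$ are subalgebras and let $R$ be the projection onto $A_-$. Writing $x=x_++x_-$ and $y=y_++y_-$ with $x_\pm,y_\pm\in A_\pm$, we have $R(x)=x_-$ and $R(y)=y_-$. Since $x_+y_+\in A_+$ is killed by $R$ and $x_-y_-\in A_-$ is fixed by $R$, the left-hand side of~\eqref{RB} expands to $R(x_+y_-)+R(x_-y_+)+2\,x_-y_-$; expanding $R\big(R(x)\,y + x\,R(y)\big)=R(x_-y+xy_-)$ and using the same two facts gives $R(x_-y_+)+R(x_+y_-)+2\,x_-y_-$, which agrees. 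Hence~\eqref{RB} holds. Finally, the two constructions are mutually inverse by the linear-algebra remark, so the correspondence is a bijection.

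I do not anticipate a real obstacle: the argument is pure bookkeeping. The only point worth flagging is that the summands $A_\pm$ are subalgebras but need not be unital — the unit of $A$ need not lie in either one — yet this never enters the computation, and in particular commutativity of $A$ is not used in this lemma (it will only be needed later, for multiplicativity).
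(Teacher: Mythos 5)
Your proof is correct and follows exactly the route the paper takes (the paper merely states that the check is direct, while you supply the explicit computations for both directions, all of which are accurate). Your closing remark that commutativity of $A$ is not needed here is also consistent with the paper, which invokes commutativity only for the multiplicativity results.
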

\begin{proof}
  This check is direct: given $R$, it follows directly from the Rota--Baxter 
equation that both $A_+ := \Ker(R)$ and $A_- := \Im(R)$ are
subalgebras (closed under multiplication).  Conversely, given a
subalgebra decomposition $A=A_+\oplus A_-$, it is easy to check
the Rota--Baxter equation.
\end{proof}

\begin{prop}
  Suppose $\phi$ sends group-like elements to $1$.
  If $\phi$ is multiplicative, then so is $\psi$.
\end{prop}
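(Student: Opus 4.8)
The plan is to rerun the induction in the proof of Proposition~\ref{psimult}, now with the modified convolution $*_R$; the only genuine novelty is that the Rota--Baxter equation~\eqref{RB} enters precisely where, in the classical case $R=\id$, one applied the recursion ``backwards''. Since $\psi(xy)$ and $\psi(x)\psi(y)$ are both bilinear in $(x,y)$, it suffices to treat $x$ and $y$ each either group-like or lying in $B_+\subset\Ker\epsilon$, and I would induct on the degree of $xy$. When $x$ and $y$ are both group-like, so is $xy$, and Lemma~\ref{psigr} gives $\psi(xy)=1=\psi(x)\psi(y)$; this anchors the induction.

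For the inductive step, at least one of $x,y$ lies in $\Ker\epsilon$, so $e(xy)=\epsilon(x)\epsilon(y)=0$ and the recursion~\eqref{REC} reduces to $\psi(xy)=-R\big((\psi*\phie)(xy)\big)$, where $\phie:=\phi-e$ and $*$ denotes the \emph{ordinary} convolution. Expanding $(\psi*\phie)(xy)$ through $\Delta(xy)=\Delta(x)\Delta(y)$, every surviving summand evaluates $\psi$ only on elements of strictly smaller degree (the summands in which the left tensor factor has full degree force the right tensor factor into degree $0$, where $\phie$ vanishes), so exactly as in Proposition~\ref{psimult} the induction hypothesis lets Lemma~\ref{threeterms} be applied with $\alpha=\psi$, yielding
$$
(\psi*\phie)(xy)=u\,\psi(y)+\psi(x)\,v+u\,v,\qquad u:=(\psi*\phie)(x),\quad v:=(\psi*\phie)(y).
$$

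It remains to feed this into $\psi(xy)=-R\big((\psi*\phie)(xy)\big)$. If one factor is group-like --- say $x$, so that $y\in B_+\subset\Ker\epsilon$ --- then $\Delta(x)=x\tensor x$ gives $u=\psi(x)\,\phie(x)=0$ (since $\phi-e$ kills group-like elements) and $\psi(x)=1$, whence $\psi(xy)=-R(v)=\psi(y)=\psi(x)\psi(y)$ by~\eqref{REC} applied to $y$. Otherwise $x,y\in\Ker\epsilon$, so~\eqref{REC} gives $\psi(x)=-R(u)$ and $\psi(y)=-R(v)$; substituting and using linearity of $R$,
$$
\psi(xy)=-R\big(u(-R(v))+(-R(u))v+uv\big)=R\big(uR(v)+R(u)v\big)-R(uv).
$$
By the Rota--Baxter equation~\eqref{RB} applied to $u$ and $v$, $R\big(R(u)v+uR(v)\big)=R(uv)+R(u)R(v)$, so $\psi(xy)=R(u)R(v)=\psi(x)\psi(y)$, closing the induction.

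The one delicate point, inherited verbatim from Proposition~\ref{psimult}, is the licence to invoke Lemma~\ref{threeterms} with the not-yet-known-multiplicative $\psi$ in the place of $\alpha$; I expect this to be the main thing needing care, and it is handled entirely by the degree bookkeeping above (the full-degree terms of $\Delta(xy)$ die against $\phie$). The genuinely new step is the last display: the three cross-terms telescope to the single product $\psi(x)\psi(y)$ precisely because $R$ satisfies~\eqref{RB} --- this is the renormalisation avatar of applying the recursion~\eqref{eq:recurs} backwards in the case $R=\id$, and it is also where commutativity of $A$, already used in Lemma~\ref{threeterms}, is indispensable.
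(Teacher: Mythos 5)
Your proposal is correct and is essentially the paper's proof: both rest on Lemma~\ref{threeterms} applied to $\alpha=\psi$ (justified by induction on $\deg(xy)$ with the same degree bookkeeping), the recursion~\eqref{REC}, and a single application of the Rota--Baxter identity~\eqref{RB}; you merely run the computation in one direction (substituting $\psi(x)=-R(u)$, $\psi(y)=-R(v)$ and simplifying) where the paper computes $\psi(xy)$ and $\psi(x)\psi(y)$ separately and matches them. Your explicit case split according to whether $x$ or $y$ is group-like is a small tidiness gain over the paper, which leaves that reduction implicit, but it does not change the argument.
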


\begin{proof}
  We use the shorthand notation $\phie:=\phi-e$.  We calculate on one hand
  \begin{eqnarray*}
  \psi(xy) &=& -R\big[ (\psi*\phie)(xy) \big] \\
  &=&
    - R\big[(\psi * \phie)(x) \psi(y) +
  \psi(x) (\psi * \phie)(y) 
  + (\psi * \phie)(x) (\psi * 
  \phie)(y) \big],
  \end{eqnarray*}
  by induction on $\deg(xy)$, using Lemma~\ref{threeterms}, exactly as
  in the proof of Proposition~\ref{psimult} in the classical M\"obius case.
  
  On the other hand, we compute (using \eqref{REC} twice):
  $$
  \psi(x)\psi(y) = \left( -R\big[ (\psi*\phie)(x) \big] \right)
  \left( -R\big[ (\psi*\phie)(y) \big] \right)   .
  $$
  Now apply the Rota--Baxter identity~\eqref{RB}:
  $$
  = -R \Big[
  (\psi*\phie)(x) \cdot (\psi*\phie)(y)
  -   R[(\psi*\phie)(x)] \cdot (\psi*\phie)(y)
-   (\psi*\phie)(x) \cdot R[(\psi*\phie)(y)]
  \Big]
  $$
  and use \eqref{REC} backwards twice:
  $$
  = -R \left[
  (\psi*\phie)(x) \!\cdot\! (\psi*\phie)(y)
  \;+\;   \psi(x) \!\cdot\! (\psi*\phie)(y)
  \;+\;   (\psi*\phie)(x) \!\cdot\! \psi(y)]
  \right] ,
  $$
  which agrees with the computation of $\psi(xy)$.
\end{proof}

\begin{blanko}{Non-concluding historical remarks.}
  Equation~\eqref{REC} is the abstract BPHZ recursion of
  Section~\ref{sec:BPHZ}, often called the Bogoliubov recursion.  The
  Hall--Leroux style even-odd formula
  $$
  \psi = \psi_{\operatorname{even}}-\psi_{\operatorname{odd}}
  $$
  of Equation~\eqref{phievenodd} features less prominently in 
  renormalisation theory, see \cite{EbrahimiFard-Kreimer:0510} and
  \cite{EbrahimiFard-Manchon}.
  Expanded, it says
  $$
  \psi = \sum_{n\geq 0} (-1)^n \psy_n = \sum_{n\geq 0}(-1)^n R(R(R( \dots *
  \phie )*\phie)*\phie) ,
  $$
  where the $n$th term of the sum has $n$ applications of $R$ and $n$ 
  convolution factors, and
  where as usual we use the shorthand $\phie := \phi-e$.  This is the
  solution of Atkinson~\cite{Atkinson} to the factorisation problem
  posed by $R$.  Atkinson actually uses the abstract form of the
  `Bogoliubov' recursion in his Second Proof~\cite{Atkinson}, in a way
  similar to the proofs above.  The equivalence between Atkinson's
  formula and the Bogoliubov recursion has been exploited further in
  the context of renormalisation and Lie theory by Ebrahimi-Fard,
  Manchon and Patras~\cite{EbrahimiFard-Manchon-Patras}.  It is 
  striking that it comes about from the two aspects of general M\"obius 
  inversion.

\begin{center}
  * * *
\end{center}

  Frederick Atkinson spent the first part of his mathematical life working
  in analytic number theory, contributing in particular to the theory of
  arithmetic functions and Dirichlet series.
  His 1949 paper with Cherwell~\cite{Atkinson-Cherwell} (cited in Hardy 
  and Wright~\cite{Hardy-Wright}), is about average values of arithmetic 
  functions related by M\"obius inversion.
  In the 1950s his interests shifted to functional analysis 
  and operator theory, which was the context for his interest in Baxter's
  work, leading to his 1963 paper~\cite{Atkinson} already
  mentioned.  For more information about Atkinson's life and work, see~\cite{Mingarelli}.

  \medskip
  
  The notion of Rota--Baxter algebra had been introduced by Glen
  Baxter~\cite{Baxter} in fluctuation theory of sums of random variables in
  1960.  Rota, Cartier, Foata, and others realised the usefulness of the
  notion (at the time called {\em Baxter algebras}\footnote{The 
  renaming to {\em Rota--Baxter algebra} occurred in 
  \cite{EbrahimiFard:0207043}, marking also the first connection 
  between this subject and renormalisation.}) also in algebra and
  combinatorics, notably in the theory of symmetric functions, and
  Rota~\cite{Rota:BaxterI} used elementary category theory to unify several
  results by establishing them in the free Rota--Baxter algebra.  For a
  glimpse into the extensive theory of Rota--Baxter algebras, with emphasis
  on their use in renormalisation, see \cite{EbrahimiFard-Guo,EbrahimiFard-Guo-Kreimer,
  EbrahimiFard-Guo-Manchon:0602004}.
 
  \medskip
  
  Gian-Carlo Rota was a main character both in the development of M\"obius
  inversion and in the development of Rota--Baxter algebras, in both
  cases making these constructions into general tools.  Naturally, he
  also combined these two toolboxes: for example, in his 1969 proof of the
  so-called Bohnenblust--Spitzer identity (see 
  also~\cite{EbrahimiFard-Manchon-Patras}) in the
  free (and hence in every) Rota--Baxter 
  algebra~\cite{Rota:BaxterII}, a
  key point is showing that the signs in that formula arise from the
  M\"obius function of the partition lattice~\cite{Rota:Moebius}.
  Rota did not have the idea of
  entangling the Rota--Baxter operator with the recursions of M\"obius
  inversion itself, though.  From the ahistorical viewpoint of the present
  contribution, this is what Bogoliubov~\cite{Bogoliubov-Parasiuk} and
  Atkinson~\cite{Atkinson} achieved---without having the general
  theory of M\"obius inversion at their disposal.
\end{blanko}

\bigskip
  
  \footnotesize
  
  \noindent {\bf Acknowledgments.} I wish to thank Kurusch
  Ebrahimi-Fard for helping me with renormalisation and many related topics
  over the past decade, and more specifically for many pertinent
  remarks on this manuscript.  Thanks also to Dominique Manchon and
  Fr\'ed\'eric Patras for helpful feedback.  Support from grants
  MTM2016-80439-P (AEI/FEDER, UE) of Spain and 2017-SGR-1725 of
  Catalonia is gratefully acknowledged.

\hyphenation{mathe-matisk}

\end{document}